\newtheorem{theorem}{\textbf{Theorem}}
\newtheorem{condition}{\textbf{Condition}}
\newtheorem{lemma}{\textbf{Lemma}}
\newtheorem{corollary}{\textbf{Corollary}}
\begin{document}

\title{Design and Optimization of Hierarchical Gradient Coding for Distributed Learning at Edge Devices}

\author{Weiheng Tang, Jingyi Li, Lin Chen,~\IEEEmembership{Member,~IEEE},\\and Xu Chen,~\IEEEmembership{Senior Member,~IEEE}
\thanks{Weiheng Tang, Jingyi Li and Xu Chen are with the School of Computer Science and
Engineering, Sun Yat-sen University, Guangzhou 510275, China (E-mail:
\{tangwh6, lijy573\}@mail2.sysu.edu.cn; chenxu35@mail.sysu.edu.cn).}
\thanks{Lin Chen is with the School of Computer Science and Engineering, Guang-Dong Province Key Laboratory of Information Security Technology, Sun Yat-sen University, Guangzhou 510275, China (E-mail: chenlin69@mail.sysu.edu.cn).}}
\maketitle



\begin{abstract}
Edge computing has recently emerged as a promising paradigm to boost the performance of distributed learning by leveraging the distributed resources at edge nodes. Architecturally, the introduction of edge nodes adds an additional intermediate layer between the master and workers in the original distributed learning systems, potentially leading to more severe straggler effect. Recently, coding theory-based approaches have been proposed for stragglers mitigation in distributed learning, but the majority focus on the conventional workers-master architecture. In this paper, along a different line, we investigate the problem of mitigating the straggler effect in hierarchical distributed learning systems with an additional layer composed of edge nodes. Technically, we first derive the fundamental trade-off between the computational loads of workers and the stragglers tolerance. Then, we propose a hierarchical gradient coding framework, which provides better stragglers mitigation, to achieve the derived computational trade-off. To further improve the performance of our framework in heterogeneous scenarios, we formulate an optimization problem with the objective of minimizing the expected execution time for each iteration in the learning process. We develop an efficient algorithm to mathematically solve the problem by outputting the optimum strategy. Extensive simulation results demonstrate the superiority of our schemes compared with conventional solutions.
\end{abstract}

\begin{IEEEkeywords}
Distributed learning, hierarchical architecture, stragglers tolerance, gradient coding
\end{IEEEkeywords}

\section{Introduction}\label{set1}
\IEEEPARstart{M}{achine} learning has become one of the most common computing tasks due to the proliferation of artificial intelligence. However, it is challenging to run large-scale learning tasks on a single machine with limited computation and storage capacities given the increasing sizes of models and datasets. Hence, distributed learning has emerged as an effective paradigm for large-scale learning tasks recently\cite{2011Parallelized,2011HOGWILD,boyd2011distributed,li2018learning}. A complex learning task is partitioned into distinct sub-tasks and executed by several workers simultaneously. In this way, the computational loads of the central master is distributed to the workers. Meanwhile, more and more wireless edge devices with decent communication and computation capabilities have been utilized as workers to execute the learning sub-tasks\cite{li2018learning,chen2019deep}.

However, conventional workers-master architecture for distributed learning consumes significant resources such as network bandwidth at the central master, which may cause severe bottlenecks in learning. To mitigate this side effect, edge computing\cite{shi2016edge,zhou2019edge,xu2021edge} is applied by exploiting edge nodes (e.g., base stations with MEC servers and smart IoT gateways) located in the proximity of distributed workers for enhancing computation and communication capabilities. By leveraging edge nodes to preliminary process the results from workers (i.e., edge devices) in a distributed manner, edge computing can greatly relieve communication burden of the central master and utilize the distributed resources at edge devices more sufficiently. For instance, if a central master communicates with 100 workers, adding 10 edge nodes for pre-aggregation would reduce the communication loads on the master by a factor of 10 compared to the original setup. Due to its profound benefits, such hierarchical workers-edge-master architecture has been introduced into many distributed learning systems\cite{ren2019survey,li2018learning,chen2019deep,hiertrain9094236,HEFL9127160,InFL9831099}.

Although distributed learning has emerged as a promising learning paradigm and has been adopted in a multitude of smart wireless and IoT applications, it also brings new design challenges. Specifically, individual edge workers in distributed learning systems, especially in the wireless setting, may take a long time or even fail to complete their computing tasks due to their weak computation and communication capabilities\cite{lee2017speeding}. These workers are called \textit{stragglers}. The unpredictable delay caused by stragglers, namely \textit{stragglers effect}, greatly degrades the performance and even outweighs the benefits of distributed learning.

Recently, coding theory has been employed as a powerful tool for stragglers mitigation. Specifically, the combination of distributed gradient descent and coding techniques is called \textit{gradient coding}, which was first introduced in \cite{tandon2017gradient}, and inspired a series of follow-up works\cite{raviv2018gradient,halbawi2018improving,wang2019heterogeneity,DSW8755563,TCOM9755943}. By adding redundant computing tasks to the workers, gradient coding enables the central master to recover the computing results with a subset of workers, thus reducing the stragglers effect.

Most of the studies on gradient coding have focused on the conventional workers-master topology design so far (cf. Section \ref{set7} for a synthetic review of related works and Table \ref{tab2} summarizing the contribution of our work compared to the state of the art). The design and optimization of gradient coding schemes in the hierarchical distributed learning systems at the edge is still largely unexplored territory, where there are two main technical challenges.
\begin{itemize}
    \item First, hierarchical distributed learning systems may potentially suffer from more severe straggler effect. Except for those straggling workers, some edge nodes may become edge stragglers, because of the communication workloads through layers and their weak communication capabilities with the master or workers. All the workers connected to such an edge straggler will also be implicated to become stragglers, thus calling for a novel design of gradient coding scheme taking into account the hierarchical architecture.
    \item Second, hierarchical distributed learning systems are more heterogeneous than the conventional master-worker architecture. Aside from workers with diverse communication and computation capabilities, heterogeneous edge nodes must be considered. Such heterogeneity needs to be carefully handled in the optimization of the gradient coding scheme, which underscores the importance of selecting a suitable stragglers tolerance level to improve performance.
\end{itemize}

Given the challenges above, in this paper, we embark on developing a novel hierarchical gradient coding framework for hierarchical distributed learning systems, which provides better stragglers mitigation for more severe stragglers effect in hierarchical architecture. Our proposed coding scheme encodes the partial gradient results of each worker with two layers of codes and utilizes the decoding capabilities of both the master and edge nodes. To further improve the performance of our proposed framework in heterogeneous scenarios, we formulate an optimization problem with the objective of minimizing the expected execution time for each iteration in hierarchical coded distributed learning systems, and develop an optimization algorithm with theoretical performance bound. The main contributions of our work are summarized as follows:
\begin{itemize}
\item We derive a fundamental computational trade-off between the computational loads of workers and the stragglers tolerance which also depicts the relationship between the added computing redundancy and stragglers mitigation capability of hierarchical distributed learning systems.

\item We propose a hierarchical gradient coding framework to achieve the optimal trade-off aforementioned. Our proposed coding scheme encodes the partial gradient results with two layers of codes and utilizes the decoding capabilities of both the master and edge nodes. Hence, it can achieve the same stragglers tolerance as the conventional gradient coding schemes with fewer computational loads.

\item We model the total runtime of each iteration in the proposed hierarchical coded distributed learning system. Accordingly, we further formulate a jointly node and coding scheme selection problem (JNCSS) with the objective of minimizing the expected execution time for each iteration. Then. we develop an optimization algorithm to solve the problem by outputting the optimum nodes and coding scheme selection. We also give the theoretical performance bound of our proposed algorithm.

\item We conduct extensive numerical experiments on popular image classification datasets, which demonstrate the superiority of our proposed schemes, compared to several conventional schemes. For example, our proposed hierarchical gradient coding scheme achieves up to $60.1\%$ and $59.8\%$ performance gain over the conventional gradient coding and uncoded schemes in terms of runtime metric, respectively.
\end{itemize}

\textbf{Roadmap}. This paper is organized as follows. We first  present the system model and computational trade-off in Section \ref{set2}. And then, we propose a hierarchical gradient coding scheme in Section \ref{set3}. In Section \ref{set5}, we first model the total runtime of each iteration in the proposed hierarchical coded distributed learning system. After that, we analyze and optimize the total runtime of each iteration based on our modeling. In Section \ref{set6}, numerical experiments is carried out. And in Section \ref{set7}, we present the related works about stragglers mitigation in distributed systems. Finally, the conclusions are drawn in Section \ref{set8}.

\section{System Model and Computational Trade-off}\label{set2}
In this section, we first describe the generic model of hierarchical distributed learning systems. We then derive the fundamental computational trade-off between the computational loads of workers and the stragglers tolerance.

\textbf{Notation}. For any $x,y\in\mathbb{N}^{+}$, $\boldsymbol{1}_{x\times y}$ and $\boldsymbol{0}_{x\times y}$ denote the matrix of size $x\times y$ with all entries being 1 and 0, respectively. For convenience, we use $[x]$ to denote the set $\{1,2,...,x\}$, and use $[0\colon x)$ to denote the set $\{0,1,...,x-1\}$.
\subsection{System Model}\label{set21}
\begin{figure}[t]
	\centering
	\includegraphics[width=0.5\textwidth]{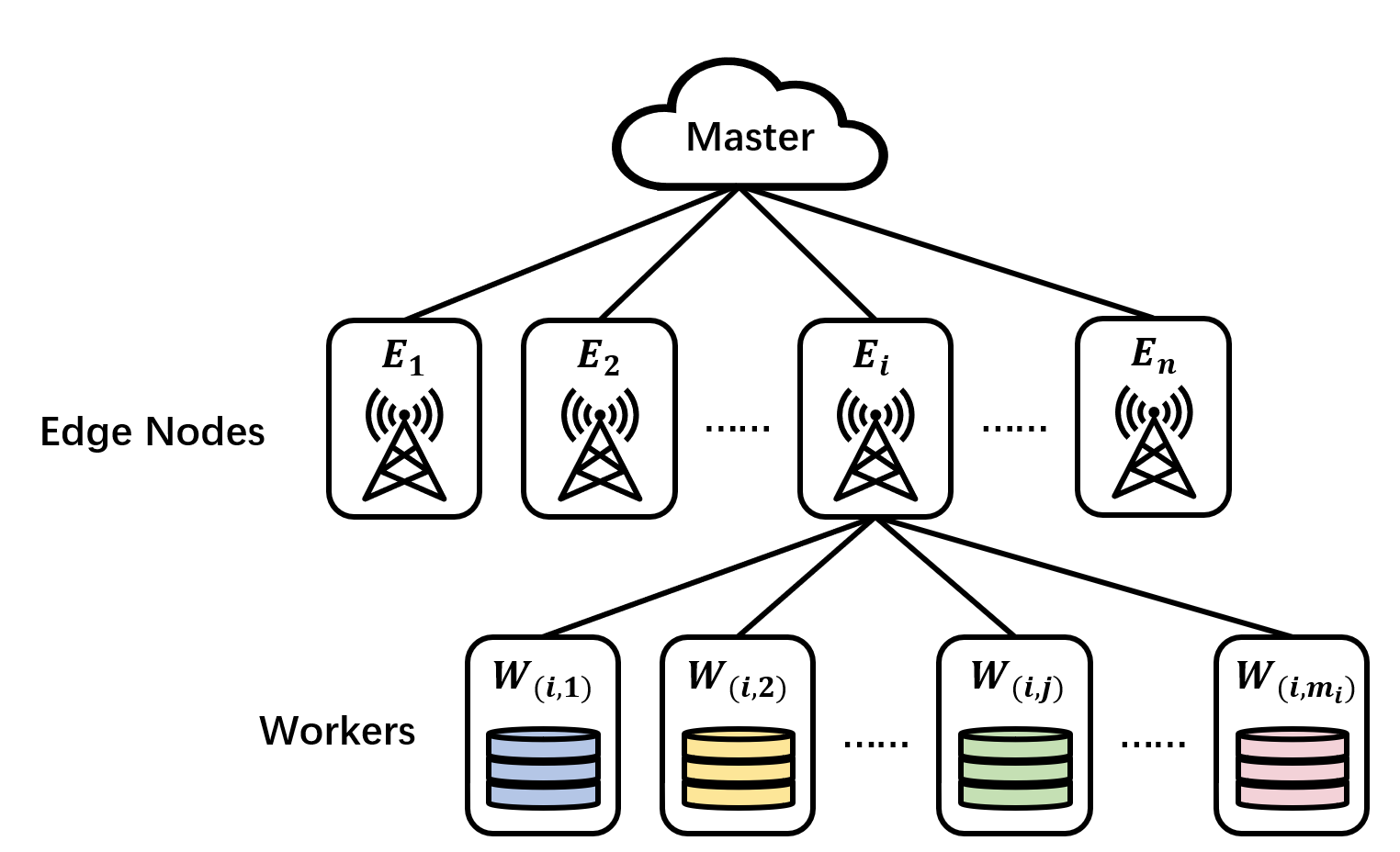}
	\caption{Illustration of a hierarchical distributed learning system.}
 \label{fig1}
\end{figure}
We consider a hierarchical distributed learning system with $n$ edge nodes $E_{1},E_{2},...,E_{n}$, where, for any $i\in[n]$, the edge node $E_{i}$ connects to $m_{i}$ workers $W_{(i,1)},W_{(i,2)},...,W_{(i,m_{i})}$ separately and all of them interact with a same central master, as illustrated in Fig. \ref{fig1}.

Such a hierarchical system trains a machine learning model with distributed gradient descent method by crowdsourcing the training tasks and allocating the sub-datasets among the workers which compute the partial gradient results in parallel. The central master will recover the full gradient with its received results during each training iteration. More precisely, we intend to learn model parameters $\beta$ by minimizing the loss function $L(\mathcal{D};\beta)$, where $\mathcal{D}=\{(x_{i},y_{i})\}^{N}_{i=1}$ is a given dataset. And the full gradient of the loss function can be expressed as
\begin{equation}\label{1}
    \nabla L(\mathcal{D};\beta)=\sum^{N}_{i=1}\nabla l(x_{i},y_{i};\beta).
\end{equation}

All of the data will be divided into $K$ disjoint sub-datasets $\{\mathcal{D}_{1},\mathcal{D}_{2},...,\mathcal{D}_{K}\}$. Hence the full gradient of the loss function can also be expressed as
\begin{equation}\label{2}
    \nabla L(\mathcal{D};\beta)=\sum^{K}_{i=1}\sum_{(x,y)\in\mathcal{D}_{i}}\nabla l(x,y;\beta).
\end{equation}

Next, we introduce the concept of stragglers of workers and edge nodes, as illustrated in Fig. \ref{fig1-3}. Due to limited computation or communication capabilities, certain workers and edge nodes may require a long time to return their results, hence becoming stragglers. Specifically, edge stragglers will result in all the workers connected to them also becoming stragglers.

Considering the load balance of workers, each worker is supposed to train $D$ sub-datasets, i.e. computational loads, according to a fixed allocation scheme\footnote{Due to space limit, we will consider the more involved cases of unbalanced trarining workload allocation in a future work.}. During each iteration, workers will compute the partial gradients and send the linear combinations of them to edge nodes. For any edge node $E_{i}$, which plays the role of partial aggregator and decoder, it will also combine its received information linearly and return the results to the central master after receiving from the fastest $f_{w}^{i}$ workers. We aim to recover the full gradient from these linear combinations received from the fastest $f_{e}$ edge nodes with designed coding scheme. Equivalently, we would like to guarantee the stragglers tolerance level such that the central master is able to tolerate any $s_{e}=n-f_{e}$ stragglers and each edge node is able to tolerate any $s_{w}=m_{i}-f_{w}^{i}$ stragglers.
\begin{figure}[t]
	\centering
	\includegraphics[width=0.5\textwidth]{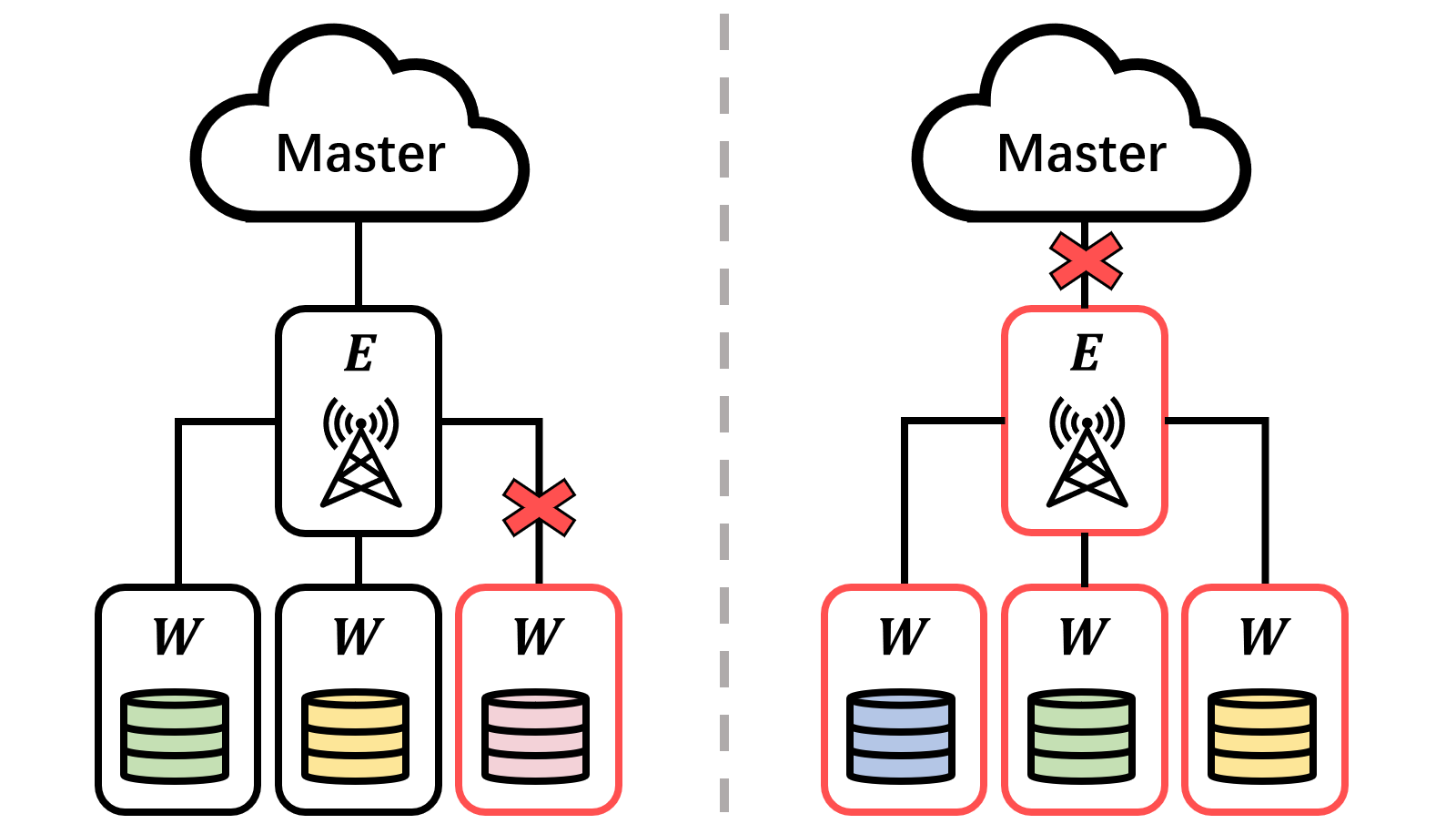}
	\caption{Illustration of the stragglers in workers and edge nodes.}
 \label{fig1-3}
\end{figure}

\subsection{Computational Trade-off}\label{set22}
Considering a hierarchical distributed learning system described in Section \ref{set21}, then we will explore the fundamental computational trade-off between the computational loads of workers and the stragglers tolerance, which also depicts the relationship between the added computing redundancy and stragglers mitigation capability.

Intuitively, to recover the full gradient while tolerating more stragglers, it is necessary to increase the computational loads of each worker. Actually, \cite{tandon2017gradient} describes the lower bound of computational loads to tolerate any $s_{w}$ straggling workers in conventional worker-master topology, i.e., single-layer architecture with only straggling workers:
\begin{equation}\label{3}
    \frac{D}{K}\geq\frac{s_{w}+1}{m}.
\end{equation}
This inequality also depicts the relationship between the added computing redundancy and stragglers tolerance. And we briefly introduce the process of obtaining the above inequality. To begin with, we consider a sub-dataset $\mathcal{D}_{i}$, and its gradient result can be recovered from any $m-s_{w}$ out of $m$ workers, only if this sub-dataset is computed by at least $s_{w}+1$ ones. For total $K$ sub-datasets, there will be not less than $K(s_{w}+1)$ computational loads to be allocated. With $D$ allocated loads each worker, hence we obtain inequality (\ref{3}) to ensure that any sub-dataset's result can be recovered. And several gradient coding schemes have been designed to achieve this trade-off, which implies that the computational loads of each worker can reach the lower bound above.

In hierarchical distributed learning systems, both stragglers of workers and edge nodes should be considered in the computational trade-off. And accordingly, we derive the new result on the computational trade-off as follows:
\begin{theorem}
Given a hierarchical distributed learning system with $n$ edge nodes, the $i$-th edge node $E_{i}$ connects to $m_{i}$ workers and $m=\min_{i} m_{i}$. Every worker will train $D$ of $K$ disjoint sub-datasets. To tolerate any $s_{e}\in[0\colon n)$ edge stragglers and any $s_{w}\in[0\colon m)$ stragglers of workers in each edge node, computational loads of each worker should satisfy
\begin{equation}\label{p1}
\frac{D}{K}\geq\frac{(s_{e}+1)(s_{w}+1)}{\sum^{n}_{i=1}m_{i}}.
\end{equation}
\end{theorem}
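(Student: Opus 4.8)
The plan is to generalize the counting argument behind inequality \eqref{3} to a two-level setting. For each edge node $E_i$ and sub-dataset $\mathcal{D}_j$, let $a_{ij}$ be the number of workers attached to $E_i$ that are assigned $\mathcal{D}_j$, and set $c_j:=\sum_{i=1}^{n}a_{ij}$, the total number of worker copies of $\mathcal{D}_j$ in the whole system. Since each of the $m_i$ workers under $E_i$ trains exactly $D$ sub-datasets, $\sum_{j=1}^{K}a_{ij}=m_iD$, hence $\sum_{j=1}^{K}c_j=D\sum_{i=1}^{n}m_i$. The theorem then reduces to the per-sub-dataset redundancy bound $c_j\ge(s_e+1)(s_w+1)$ for every $j\in[K]$: granting it, $D\sum_{i=1}^{n}m_i=\sum_{j=1}^{K}c_j\ge K(s_e+1)(s_w+1)$, which is exactly \eqref{p1} after rearranging.

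To establish $c_j\ge(s_e+1)(s_w+1)$, I would argue by contradiction in the spirit of the single-layer proof, but now exploiting an adversary that chooses both the edge stragglers and, at each surviving edge node, the straggling workers. Suppose $c_j<(s_e+1)(s_w+1)$ for some $j$. Then fewer than $s_e+1$ edge nodes can have $a_{ij}\ge s_w+1$ (otherwise $c_j\ge(s_e+1)(s_w+1)$), so at least $f_e=n-s_e$ edge nodes satisfy $a_{ij}\le s_w$; pick a set $A$ of $f_e$ of them. The adversary makes the $s_e$ edge nodes outside $A$ straggle, and at each $E_i\in A$ forces the (at most $s_w$) workers holding $\mathcal{D}_j$ to straggle, padding arbitrarily up to exactly $s_w$ worker stragglers (feasible since $s_w<m\le m_i$). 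Writing $g_k$ for the partial gradient on $\mathcal{D}_k$: none of the fastest $f_w^i$ workers at any $E_i\in A$ has computed $\mathcal{D}_j$, so each of their (linear) messages — and hence $E_i$'s linear aggregate forwarded to the master — depends on the data only through $\{g_k:k\ne j\}$; therefore the master's reconstruction, computed from the messages of the nodes in $A$, likewise does not depend on $g_j$. Perturbing $g_j$ alone (e.g.\ via a suitable choice of data, or treating the $g_k$ as formal variables, as in \cite{tandon2017gradient}) changes the target $\sum_{k=1}^{K}g_k$ but not the master's output, contradicting the assumed tolerance of $s_e$ edge and $s_w$ worker stragglers. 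Hence $c_j\ge(s_e+1)(s_w+1)$.

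Assembling the two ingredients yields \eqref{p1}. The bookkeeping identity $\sum_j c_j=D\sum_i m_i$ and the final rearrangement are routine. The step I expect to be the main obstacle is the adversarial argument for $c_j\ge(s_e+1)(s_w+1)$: one has to check that the worker-layer straggler choices (erasing every copy of $\mathcal{D}_j$ at each of the $f_e$ surviving edge nodes) are simultaneously feasible within the per-edge budget $s_w$ and consistent with the edge-layer choice, and then combine the linearity of the worker encodings and edge aggregations with the fact that the surviving edge nodes jointly carry no information about $g_j$ to rule out recovery of $\sum_k g_k$. This interplay between the two layers of straggler choices is precisely what makes the hierarchical bound more delicate than the single-layer inequality \eqref{3}.
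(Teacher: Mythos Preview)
Your argument is correct, and it is in fact tighter than the paper's own proof. The paper argues in two stages: it first observes that each sub-dataset must reach at least $s_e+1$ edge nodes (giving $\sum_i m_i D\ge K(s_e+1)$), then invokes the load-balance assumption to assert that edge node $E_i$ must carry at least $K(s_e+1)\,m_i/\sum_j m_j$ of these copies, and finally applies the single-layer bound again inside each edge node to pick up the extra factor $s_w+1$. That middle step is not really a necessary condition on an arbitrary allocation; it is a proportional-split heuristic that happens to yield the right inequality. Your route avoids this: you prove the per-sub-dataset bound $c_j\ge(s_e+1)(s_w+1)$ directly via a single adversary that simultaneously picks the $s_e$ edge stragglers (those outside $A$) and, at each surviving $E_i\in A$, the $\le s_w$ workers holding $\mathcal{D}_j$. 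The pigeonhole step (fewer than $s_e+1$ edge nodes can have $a_{ij}\ge s_w+1$) is what replaces the paper's load-balance detour, and the feasibility check $s_w<m\le m_i$ for padding is exactly what you need. The upshot: the paper's proof is shorter but leans on an allocation-specific balancing assertion, whereas yours is a genuine information-theoretic lower bound valid for any assignment with per-worker load $D$.
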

\begin{proof}
We consider the edge stragglers first, the gradient results of sub-datasets $\mathcal{D}_{i}$ is able to be recovered from any $f_{e}$ edge nodes only if it can be received from at least $s_{e}+1$ edge nodes. Hence, for total $K$ sub-datasets, we need $K(s_{e}+1)$ computational loads at least, and we have
\begin{equation}
    \sum^{n}_{i=1}m_{i}D\geq K(s_{e}+1).
\end{equation}
To balance loads of each worker, there should be at least $K(s_{e}+1)\frac{m_{i}}{\sum^{n}_{i=1}m_{i}}$ computational loads attained by all the workers of edge node $E_{i}$. Considering any $s_{w}$ stragglers of workers in each edge node, similarly, we have
\begin{equation}
    m_{i}D\geq K(s_{e}+1)(s_{w}+1)\frac{m_{i}}{\sum^{n}_{i=1}m_{i}}.
\end{equation}
Equivalently, we have proved
\begin{equation}
    \frac{D}{K}\geq \frac{(s_{e}+1)(s_{w}+1)}{\sum^{n}_{i=1}m_{i}}.
\end{equation}
\end{proof}

Notice that the special case $n=1$ in Theorem 1, i.e. there is only one edge node, is the same as the case considered in conventional workers-master architecture. According to Theorem 1, workers attain more computational loads because of the edge stragglers, which is aforementioned as the more severe stragglers effect. In the rest of this paper, we assume that $\frac{\sum_{i\in \mathcal{F},|\mathcal{F}|=f_{e}}m_{i}(s_{e}+1)}{\sum_{i}m_{i}}\geq 1$, which ensures that there are sufficient workers to complete the training tasks, and guarantees the existence of gradient coding schemes achieving the optimal computational trade-off. And we provide the following example to illustrate such a trade-off.
\begin{figure}[t]
	\centering
	\includegraphics[width=0.5\textwidth]{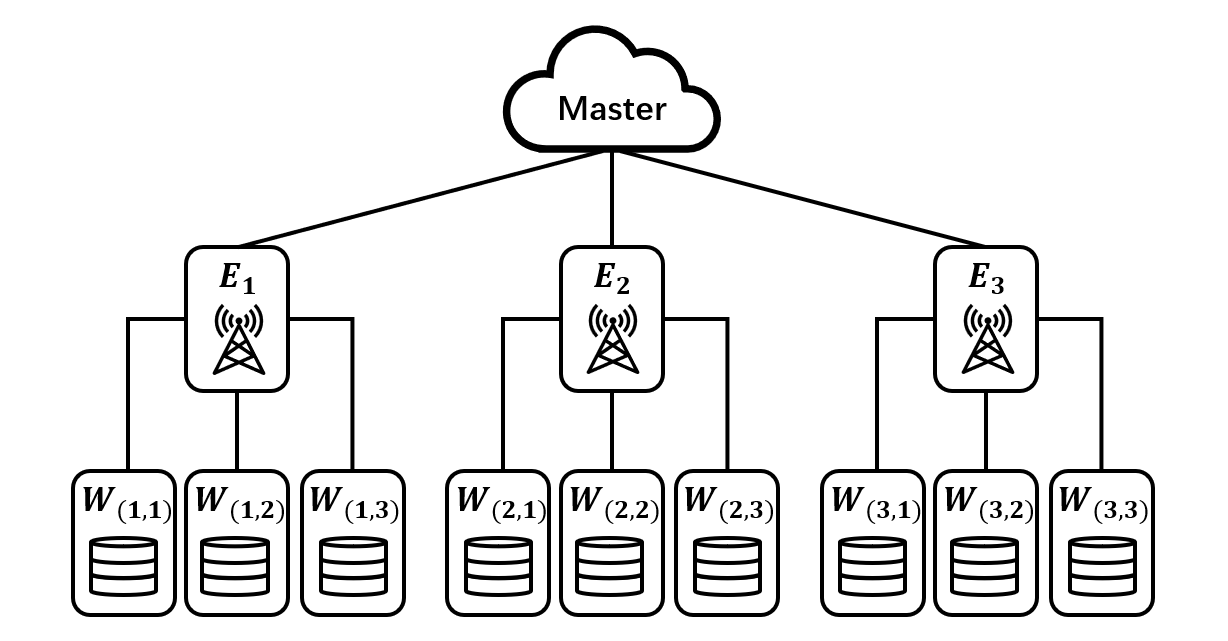}
	\caption{A hierarchical distributed system with $3$ edge nodes $E_{1},E_{2},E_{3}$, each of them connects to $3$ workers separately and interacts to a same master. All of the data will be divided into 9 disjoint sub-datasets.}
\label{exp}
\end{figure}

\textbf{Example 1}: In the hierarchical distributed learning system depicted in Fig. \ref{exp}, we first intend to tolerate arbitrary one straggling edge node, which means the results of a fixed sub-datasets can be received from at least 2 edge nodes. Consequently, the total number of sub-datastes to be allocated will be not less than 18, with 6 sub-datasets to each one. Then, we continue to tolerate arbitrary one straggling worker in each edge node. Similarly, each edge node needs to receive the results of a sub-dataset from at least 2 workers of it, resulting in a total allocation of 12 sub-datasets. Hence, each worker obtains 4 sub-datasets, i.e., its computational loads.

Then, we show that conventional gradient coding schemes cannot achieve the optimal trade-off.
\begin{corollary}\label{coro1}
Given a hierarchical distributed learning system with $n$ edge nodes, and the $i$-th edge node $E_{i}$ connects to $m_{i}$ workers, to tolerate any $s_{e}\in[0\colon n)$ edge stragglers and any $s_{w}\in[0\colon m)$ stragglers of workers in each edge node, conventional gradient coding schemes designed between workers and the master, which only consider single-layer coding, can not achieve the optimal trade-off.
\end{corollary}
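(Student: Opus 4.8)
\begin{proofs}
The plan is to lower-bound the per-worker computational load that \emph{any} conventional single-layer gradient coding scheme must incur once it is deployed on the hierarchical system, and then compare this bound with the optimal trade-off $\frac{(s_{e}+1)(s_{w}+1)}{\sum_{i}m_{i}}$ of Theorem 1. The crucial observation is that a conventional scheme does no encoding or decoding at the edge layer, so the edge nodes act only as relays and the master effectively faces a single flat pool of $M:=\sum_{i=1}^{n}m_{i}$ workers protected by one single-layer code.

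First I would pin down the worst-case number of effective worker stragglers seen by the master. Since an edge straggler drags down every worker attached to it, and up to $s_{w}$ further workers may straggle under each surviving edge node, over all admissible straggler patterns (any $s_{e}$ edge stragglers together with any $s_{w}$ worker stragglers in each of the $n-s_{e}$ surviving edge nodes) the largest number of workers whose results never reach the master is
\[
S^{\star}\;=\;\max_{\mathcal{S}\subseteq[n],\;|\mathcal{S}|=s_{e}}\ \Bigl(\sum_{i\in\mathcal{S}}m_{i}\Bigr)\;+\;s_{w}\,(n-s_{e}).
\]
Because a conventional scheme must still recover the full gradient under this pattern, it must tolerate $S^{\star}$ stragglers out of $M$ flat workers; invoking the single-layer bound (\ref{3}), which is known to be tight for conventional gradient codes, gives $\tfrac{D}{K}\ge\tfrac{S^{\star}+1}{M}$.

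It then remains to show $S^{\star}+1>(s_{e}+1)(s_{w}+1)$ outside degenerate corner cases. Using $m_{i}\ge m$ for every $i$ we get $S^{\star}\ge s_{e}m+s_{w}(n-s_{e})$, and expanding $(s_{e}+1)(s_{w}+1)$ reduces the desired inequality to $s_{e}(m-1)+s_{w}(n-1)\ge 2s_{e}s_{w}$, which is immediate from $s_{w}\le m-1$ and $s_{e}\le n-1$ and is strict unless both are equalities (the full-replication regime) or $s_{e}s_{w}=0$ with $n=1$ or $m=1$. Since Section \ref{set3} provides a hierarchical scheme attaining $\tfrac{D}{K}=\tfrac{(s_{e}+1)(s_{w}+1)}{M}$ with equality, a conventional scheme --- forced to use strictly larger load --- cannot meet the optimal trade-off.

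The main obstacle I anticipate is making the worst-case-pattern reduction rigorous: one must argue that a single-layer code, being oblivious to which edge node hosts a given worker, genuinely has to survive the pattern that removes the $s_{e}$ highest-degree edge nodes and $s_{w}$ additional workers from every surviving edge node, so that $S^{\star}$ (and not some smaller count) is the binding number of stragglers; and one must state cleanly the few degenerate parameter choices for which the conventional and hierarchical loads coincide, so that the corollary is understood over the meaningful range of $(s_{e},s_{w})$.
\end{proofs}
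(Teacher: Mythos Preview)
Your proposal is correct and follows essentially the same route as the paper: identify the worst-case effective straggler count $S^{\star}=\max_{|\mathcal{S}|=s_e}\sum_{i\in\mathcal{S}}m_i+(n-s_e)s_w$ that the master must survive, invoke the single-layer bound (\ref{3}) to get $D/K\ge (S^{\star}+1)/M$, and then verify algebraically (via $m_i\ge m$, $s_w\le m-1$, $s_e\le n-1$) that $S^{\star}+1>(s_e+1)(s_w+1)$ outside degenerate parameter choices. Your enumeration of the equality cases is in fact slightly more careful than the paper's, which only explicitly excludes the simultaneous tightness $n=s_e+1$, $m=s_w+1$.
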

\begin{proof}
For single-layer coding schemes designed between workers and the master, the decoding in master needs to deal with both straggling edge nodes and workers at the same time. As we mentioned in Section I, all the workers connected to an edge straggler will also be implicated to become stragglers. Hence, considering $s_{e}$ edge stragglers and $s_{w}$ straggling workers of each edge nodes, the maximum number of stragglers handled by the master is
\begin{equation}
s_{max}=\max_{|\mathcal{S}_{e}|=s_{e},\mathcal{S}_{e}\subset\mathcal{E}}\sum_{E_{i}\in\mathcal{S}_{e}}m_{i}+(n-s_{e})s_{w}.
\end{equation}
According to the computational loads of conventional coding design, computational loads of each worker will be
\begin{equation}\label{rmk1}
    \frac{D_{con}}{K}=\frac{\max_{|\mathcal{S}_{e}|=s_{e},\mathcal{S}_{e}\subset\mathcal{E}}\sum_{E_{i}\in\mathcal{S}_{e}}m_{i}+(n-s_{e})s_{w}+1}{\sum_{i=1}^{n}m_{i}}.
\end{equation}
For convenience, the edge stragglers set and the whole edge nodes set are denoted by $\mathcal{S}_{e}$ and $\mathcal{E}$ above. And we can further lower bound (\ref{rmk1}) as
\begin{equation}
\frac{D_{con}}{K}\geq \frac{s_{e}\min_{E_{i}\in\mathcal{E}}m_{i}+ns_{w}-s_{e}s_{w}+1}{\sum_{i=1}^{n}m_{i}}.
\end{equation}

Notice that $n\geq1+s_{e}$ and $\min_{E_{i}\in\mathcal{E}}m_{i}\geq1+s_{w}$, and these two inequalities can not be tight simultaneously, otherwise there may be only one worker compute the total gradients which is actually a centralized computation. Hence, we have $s_{e}\min_{E_{i}\in\mathcal{E}}m_{i}+ns_{w}-s_{e}s_{w}+1\textgreater (s_{e}+1)(s_{w}+1)$. Equivalently, we have proved
\begin{equation}
     \frac{D_{con}}{K}\textgreater\frac{(s_{e}+1)(s_{w}+1)}{\sum^{n}_{i=1}m_{i}}.
\end{equation}
The inequality above indicates that conventional gradient coding schemes designed between workers and the master cannot achieve the optimal trade-off.
\end{proof}

According to Corollary \ref{coro1}, conventional gradient coding schemes lead to more computational loads than the optimal trade-off due to the existence of edge stragglers. Moreover, conventional gradient coding schemes designed between workers and the master overlook the decoding capabilities of edge nodes. As a result, the partial gradient results cannot be decoded and pre-aggregated in edge nodes without accessing to the global stragglers information. Consequently, this limitation may also lead to higher communication cost. And we conclude this section by extending the above trade-off to the multi-layer distributed learning systems.
\begin{corollary}
Given a multi-layer distributed learning system with a master interacting with nodes of $L$ layers, totally $W$ workers are in the $L$-th layer. Each node in the $(i-1)$-th layer connects to disjoint $n_{i}$ layer nodes in the $i$-th layer. To tolerate $s_{i}$ stragglers of each node in the $(i-1)$-th layer (suppose the master is in the $0$-th layer), computational loads of each worker should satisfy
\begin{equation}
    \frac{D}{K}\geq\frac{\prod\limits^{L}_{i=1}(s_{i}+1)}{W}.
\end{equation}
\end{corollary}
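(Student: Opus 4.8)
The plan is to generalize the two-layer counting argument from Theorem 1 to $L$ layers by a straightforward induction on the depth, propagating the redundancy requirement layer by layer from the master down to the workers. The key observation is that the argument in Theorem 1 is fundamentally about counting: each sub-dataset must be replicated enough times at every decoding point so that, even after an adversary removes the slowest nodes, a recovering set remains. I would set up the induction so that the inductive hypothesis is a lower bound on the total number of (sub-dataset, worker) assignment pairs that must exist below a node at layer $i$, expressed as a fraction of $K$ weighted by the number of workers in that node's subtree.

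First I would formalize the subtree structure: for a node $v$ in layer $i$, let $W(v)$ denote the number of workers in layer $L$ that lie in $v$'s subtree, so that $W(\text{master}) = W$ and, because each layer-$(i-1)$ node connects to $n_i$ disjoint layer-$i$ nodes, $W(v) = \sum_{u \text{ child of } v} W(u)$. The induction claim would be: for every sub-dataset $\mathcal{D}_j$ and every node $v$ at layer $i$, the number of workers in $v$'s subtree that are assigned $\mathcal{D}_j$ is at least $\prod_{t=1}^{i}(s_t+1)$ — wait, more carefully, it must be at least $\prod_{t=i+1}^{L}(\text{does not apply})$; rather the cleaner formulation counts downward: at the master (layer $0$), tolerating $s_1$ stragglers among its $n_1$ children forces each $\mathcal{D}_j$ to be recoverable from any $n_1 - s_1$ children, hence present in at least $s_1+1$ children's subtrees; recursing into each such child, tolerating $s_2$ stragglers forces presence in at least $s_2+1$ of its children's subtrees, and so on. Composing these gives that $\mathcal{D}_j$ is assigned to at least $\prod_{t=1}^{L}(s_t+1)$ distinct workers.

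Then I would finish by the load-balancing step exactly as in Theorem 1: summing over all $K$ sub-datasets, the total number of assignment pairs is at least $K\prod_{t=1}^{L}(s_t+1)$; since each of the $W$ workers holds exactly $D$ sub-datasets, the total is also $WD$, giving $WD \geq K\prod_{t=1}^{L}(s_t+1)$, i.e., $\frac{D}{K} \geq \frac{\prod_{t=1}^{L}(s_t+1)}{W}$. I would also note that Theorem 1 is the special case $L=2$ with $n_1 = n$, $n_2 = m_i$ (with the $\min$ handling heterogeneous $m_i$, which in this corollary is absorbed into counting $W$ directly rather than a worst-case bound), and that the per-layer balancing used in Theorem 1 generalizes because the redundancy factors multiply independently across layers.

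The main obstacle I anticipate is handling heterogeneity in the branching factors cleanly. In Theorem 1 the edge nodes may have different $m_i$, and the bound $\sum_i m_i$ in the denominator already accounts for this by counting total workers; here the statement writes $n_i$ as if uniform per layer, but the subtrees below sibling nodes may still contain different numbers of workers when deeper branching is non-uniform. The recursion "present in at least $s_t+1$ children" is valid regardless, but translating "present in $\prod(s_t+1)$ workers per sub-dataset" into the aggregate bound requires only that we count workers, not subtrees — so I would phrase the entire argument in terms of worker-level multiplicity and avoid any reliance on uniformity, which sidesteps the issue. A secondary subtlety is justifying the "only if" direction rigorously: that recoverability of $\mathcal{D}_j$'s gradient from any $f$-subset of children genuinely forces replication in $s+1$ of them — this follows because if $\mathcal{D}_j$ appeared in only $s$ or fewer children's subtrees, an adversary could straggle exactly those, leaving no node below which any information about $\nabla l$ on $\mathcal{D}_j$ survives, contradicting recoverability; I would state this as the base mechanism and then invoke it at each layer.
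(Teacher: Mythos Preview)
Your proposal is correct and is precisely the natural generalization of the Theorem~1 argument that the paper intends; the paper in fact states this corollary without an explicit proof, leaving it as a direct extension of the two-layer counting argument. Your layer-by-layer replication count (each $\mathcal{D}_j$ must land in at least $\prod_{t=1}^{L}(s_t+1)$ workers) followed by the load-balancing identity $WD \geq K\prod_{t=1}^{L}(s_t+1)$ is exactly the right mechanism, and your handling of the heterogeneity subtlety by counting workers directly rather than assuming uniform subtree sizes is a sound clarification.
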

\section{Hierarchical Gradient Coding Scheme Design}\label{set3}
In this section, we propose a hierarchical gradient coding framework to provide better stragglers mitigation for more severe stragglers effect in hierarchical architecture, which achieves the optimal trade-off between computational loads and stragglers tolerance as shown in Section \ref{set22}. Our proposed hierarchical coding scheme encodes the partial gradient results with two layers of codes during the encoding phase and utilizes the decoding capabilities of both the master and edge nodes during the decoding phase. Compared to the conventional gradient coding schemes, the proposed scheme achieves the same stragglers tolerance while allocating less computational loads. To begin with, we denote the full gradient as $g$ and the partial gradient of sub-dataset $\mathcal{D}_{i}$ as $g_{i}$ for convenience. According to equation (\ref{2}), we have
\begin{equation}
  g=\sum^{K}_{i=1}g_{i}.
\end{equation}

The entire hierarchical gradient coding scheme can be divided into two phases: encoding phase and decoding phase. And we introduce the encoding phase first.

\subsection{Encoding Phase}
During the encoding phase, to utilize the decoding capabilities of both the master and edge nodes, we correspondingly design two layers of encoding for the partial gradient results from workers: encoding between the master and edge nodes, and encoding between each edge node and its workers. Starting with the edge stragglers only, to tolerate any $s_{e}$ edge stragglers, which means the master is able to recover the full gradient from any $f_{e}$ fastest edge nodes, we need an encoding matrix $\boldsymbol{B}=[\boldsymbol{b}_{1}^{\top},\boldsymbol{b}_{2}^{\top},...,\boldsymbol{b}_{n}^{\top}]^{\top} \in\mathbb{R}^{n\times K}$ for the first layer encoding, which satisfies the following condition.
\begin{condition}
For any subset $I\subset [n]$, $|I|=f_{e}$,
\begin{equation}\label{con1}
    \boldsymbol{1}_{1\times K}\in\boldsymbol{\mathrm{span}}(\{\boldsymbol{b}_{i}|i\in I\}),
\end{equation}
where $\boldsymbol{\mathrm{span}}(\cdot)$ is the span of vectors.
\label{cond1}
\end{condition}

Such an encoding matrix satisfying (\ref{con1}) can be constructed based on conventional heterogeneous gradient coding schemes (e.g.,\cite{wang2019heterogeneity}), because different numbers of sub-datasets may be attained by different edge nodes. Denoting the number of sub-datasets attained by edge node $E_{i}$ as $n_{i}$, we calculate it as
\begin{equation}\label{s31}
    n_{i}=K(s_{e}+1)\frac{m_{i}}{\sum_{i=1}^{n}m_{i}}.
\end{equation}
And the set of these $n_{i}$ sub-datasets $\mathcal{D}^{i}$ is defined as
\begin{equation}
        \mathcal{D}^{i}=\{\mathcal{D}_{(\sum_{j=1}^{i-1}n_{j}+1)\bmod K}, \cdots ,\mathcal{D}_{(\sum_{j=1}^{i-1}n_{j}+n_{i})\bmod K}\}.
\end{equation}

Then, we construct the first layer encoding matrix $\boldsymbol{B}$ to tolerate $s_{e}$ edge stragglers. By denoting the support structure of $\boldsymbol{B}$ as $\mathrm{supp}(\boldsymbol{B})$, $\mathrm{supp}(\boldsymbol{B})$ satisfies that $\mathrm{supp}(b_{ij})=\star$ if $\mathcal{D}_{j}\in\mathcal{D}^{i}$ else $\mathrm{supp}(b_{ij})=0$, where $\star$ indicates a non-zero entry. And the linear combination of the partial gradients will be sent from the edge node $E_{i}$ is given by
\begin{equation}
G_{i}=\boldsymbol{b}_{i}\left[g_{1},g_{2},...,g_{K}\right]^{\top}=\sum^{K}_{j=1}b_{ij}g_{j}.
\end{equation}
Specifically, $G_{i}$ is the result combined by edge node $E_{i}$ from its workers. Considering $s_{w}$ stragglers of workers in edge node $E_{i}$ further, we need another encoding matrix $\boldsymbol{D}^{i}\in\mathbb{R}^{m_{i}\times K}$ for the second layer encoding. We calculate the number of sub-datasets attained by each worker of edge node $E_{i}$ as
\begin{equation}\label{s32}
    D=n_{i}\frac{s_{w}+1}{m_{i}},
\end{equation}
and the corresponding sub-datasets attained by $W_{(i,j)}$ can be defined as
\begin{equation}
        \mathcal{D}^{(i,j)}=\{\mathcal{D}^{i}_{((j-1)D+1)\bmod n_{i}},\cdots,\mathcal{D}^{i}_{((j-1)D+D)\bmod n_{i}}\}.
\end{equation}

Encoding matrix $\boldsymbol{D}^{i}$ is constructed from a matrix $\bar{\boldsymbol{D}}^{i}\in\mathbb{R}^{m_{i}\times n_{i}}$, which is constructed for the $n_{i}$ sub-datasets attained by edge node $E_{i}$ to tolerate $s_{w}$ stragglers based on conventional gradient coding schemes (e.g., \cite{tandon2017gradient,raviv2018gradient}). Similarly to encoding matrix $\boldsymbol{B}$, $\bar{\boldsymbol{D}}^{i}$ satisfies the following condition:
\begin{condition}
For any subset $I\subset [m_{i}]$, $|I|=f_{w}^{i}$,
\begin{equation}\label{con2}
    \boldsymbol{1}_{1\times n_{i}}\in\boldsymbol{\mathrm{span}}(\{\bar{\boldsymbol{D}}^{i}_{j}|j\in I\}),
\end{equation}
\label{con2}
where $\bar{\boldsymbol{D}}^{i}_{j}$ is the $j$-th row of $\bar{\boldsymbol{D}}^{i}$.
\end{condition}

And $\boldsymbol{D}^{i}$ is constructed as follows:
\begin{equation}
\begin{aligned}
&\boldsymbol{D}^{i}(\{j|\mathcal{D}_{j}\in\mathcal{D}^{i}\})=\bar{\boldsymbol{D}}^{i},\\
&\boldsymbol{D}^{i}(\{j|\mathcal{D}_{j}\notin\mathcal{D}^{i}\})=\boldsymbol{0}_{m_{i}\times (K-n_{i})},
\end{aligned}
\end{equation}
where $\boldsymbol{D}^{i}(\cdot)$ indicates the sub-matrix composed of given columns of $\boldsymbol{D}^{i}$. The support structure of $\boldsymbol{D}^{i}$, $\mathrm{supp}(\boldsymbol{D}^{i})$, satisfies that $\mathrm{supp}(d^{i}_{jk})=\star$ if $\mathcal{D}_{k}\in\mathcal{D}^{(i,j)}$ else $\mathrm{supp}(d^{i}_{jk})=0$. Denoting the $j$-th row of $\boldsymbol{D}^{i}$ as $\boldsymbol{d}^{i}_{j}$, worker $W_{(i,j)}$ encodes the partial gradients as
\begin{equation}
    G_{ij}=\boldsymbol{d}^{i}_{j}\boldsymbol{\mathrm{diag}}(g_{1},g_{2},...,g_{K})\boldsymbol{b}_{i}^\top.
\end{equation}

It is worth mentioning that, according to equations (\ref{s31}) and (\ref{s32}), one can verify that the encoding phase above satisfies
\begin{equation}
    D=K\frac{(s_{e}+1)(s_{w}+1)}{\sum^{n}_{i=1}m_{i}},
\end{equation}
which means it can make the inequality (\ref{p1}) in Theorem 1 tight, achieving the optimal trade-off between the computational loads of workers and the stragglers tolerance in a hierarchical distributed learning system. Hence, according to the Corollary 1, our proposed hierarchical gradient coding framework provides better stragglers mitigation for both stragglers in edge nodes and workers compared to the conventional gradient coding schemes. Next, we will introduce our decoding phase, which utilizes decoding capabilities from both the master and the edge nodes, hence it is able to decode the two-layer encoded results and obtain the final gradients.

\subsection{Decoding Phase}
The decoding phase can also be divided into two layers: decoding in each edge node and decoding in the master. Let $\boldsymbol{D}^{i}_{\mathcal{F}_{i}}$ and $\bar{\boldsymbol{D}}^{i}_{\mathcal{F}_{i}}$ indicate the sub-matrices of $\boldsymbol{D}^{i}$ and $\bar{\boldsymbol{D}}^{i}$ respectively, with rows indexed by $\mathcal{F}_{i}$, where $\mathcal{F}_{i}\subset[m_{i}]$ represents the set of fastest workers in edge node $E_{i}$ and $|\mathcal{F}_{i}|=f_{w}^{i}$. And the received results of the edge node $E_{i}$ can be expressed as $\boldsymbol{D}^{i}_{\mathcal{F}_{i}}\boldsymbol{\mathrm{diag}}(g_{1},g_{2},...,g_{K})\boldsymbol{b}_{i}^\top$.

According to Condition \ref{con2}, there exists a row vector $\boldsymbol{c}^{i}_{\mathcal{F}_{i}}$ that satisfies
\begin{equation}
\boldsymbol{c}^{i}_{\mathcal{F}_{i}}\bar{\boldsymbol{D}}^{i}_{\mathcal{F}_{i}}=\boldsymbol{1}_{1\times n_{i}}.
\end{equation}
Recall the construction of encoding matrix $\boldsymbol{D}^{i}$, the result of $\boldsymbol{c}^{i}_{\mathcal{F}_{i}}\boldsymbol{D}^{i}_{\mathcal{F}_{i}}$ is a row vector with $K$ elements where the $k$-th element of it will be 1 if $\mathcal{D}_{j}\in\mathcal{D}^{i}$ and be 0 else. Hence, during the first layer of decoding, edge node $E_{i}$ can decode and obtain the result as
\begin{equation}  \boldsymbol{c}^{i}_{\mathcal{F}_{i}}\boldsymbol{D}^{i}_{\mathcal{F}_{i}}\boldsymbol{\mathrm{diag}}(g_{1},g_{2},...,g_{K})\boldsymbol{b}_{i}^\top=\sum^{K}_{j=1}\boldsymbol{b}_{ij}g_{j}=G_{i}.
\end{equation}

Similarly, let $\boldsymbol{B}_{\mathcal{F}}$ indicate the sub-matrix of $\boldsymbol{B}$ with rows indexed by $\mathcal{F}$, where $\mathcal{F}\subset[n]$ represents the set of fastest edge nodes and $|\mathcal{F}|=f_{e}$. And the received results of the master can be expressed as $\boldsymbol{B}_{\mathcal{F}}[g_{1},g_{2},...,g_{K}]^{\top}$. According to Condition \ref{cond1}, there exists a row vector $\boldsymbol{a}_{\mathcal{F}}$ that satisfies
\begin{equation}
    \boldsymbol{a}_{\mathcal{F}}\boldsymbol{B}_{\mathcal{F}}=\boldsymbol{1}_{1\times K}.
\end{equation}
During the second layer of decoding, the master can decode to obtain the full gradient as
\begin{equation}
    \boldsymbol{a}_{\mathcal{F}}\boldsymbol{B}_{\mathcal{F}}
    \left[g_{1},g_{2},...,g_{K}\right]^{\top}=\sum^{K}_{i=1}g_{i}=g.
\end{equation}

\begin{algorithm}[t]
    \caption{Hierarchical Gradient Coding}
    \begin{algorithmic}[1]
    \REQUIRE $\mathcal{D},n,m_{i},s_{e},s_{w}$
        \STATE\textbf{Encoding Phase}
        \STATE Construct the first layer encoding matrix $\boldsymbol{B}\in\mathbb{R}^{n\times K}$.
        \FOR{each $i\in[n]$}
            \STATE Edge node $E_{i}$ attains sub-dateset $\mathcal{D}^{i}$.
            \STATE Construct the matrix $\bar{\boldsymbol{D}}^{i}\in\mathbb{R}^{m_{i}\times n_{i}}$.
            \STATE Construct the second layer encoding matrix $\boldsymbol{D}^{i}\in\mathbb{R}^{m_{i}\times K}$ based on $\bar{\boldsymbol{D}}^{i}$.
            \FOR{each $j\in[m_{i}]$}
                 \STATE Worker $W_{(i,j)}$ attains sub-dataset $\mathcal{D}^{(i,j)}$.
                 \STATE Worker $W_{(i,j)}$ encodes the partial gradients as $$G_{ij}=\boldsymbol{d}^{i}_{j}\boldsymbol{\mathrm{diag}}(g_{1},g_{2},...,g_{K})\boldsymbol{b}_{i}^\top.$$
            \ENDFOR
        \ENDFOR
        \STATE \textbf{Decoding Phase}
        \FOR{each $i\in[n]$}
            \STATE  Edge node $E_{i}$ chooses the corresponding decoding vector $\boldsymbol{c}^{i}_{\mathcal{F}_{i}}$ to implement the first layer decoding as $$G_{i}=\boldsymbol{c}^{i}_{\mathcal{F}_{i}}\boldsymbol{D}^{i}_{\mathcal{F}_{i}}\boldsymbol{\mathrm{diag}}(g_{1},g_{2},...,g_{K})\boldsymbol{b}_{i}^\top.$$
        \ENDFOR
        \STATE The master chooses the corresponding decoding vector $\boldsymbol{a}_{\mathcal{F}}$ to implement the second layer decoding as $$g=\boldsymbol{a}_{\mathcal{F}}\boldsymbol{B}_{\mathcal{F}}[g_{1},g_{2},...,g_{K}]^{\top}.$$
    \end{algorithmic}
    \label{alg1}
\end{algorithm}
We then summarize the whole hierarchical gradient coding scheme in Algorithm \ref{alg1}. Additionally, accelerative single-layer gradient coding techniques like utilizing partial computing results\cite{DSW8755563} can also be combined in coding between workers and edge nodes. In the description above, we still consider full computing results from workers to introduce a basic hierarchical coding framework. And we also demonstrate our hierarchical gradient coding scheme with an example below.

\begin{figure}[t]
	\centering
	\includegraphics[width=0.5\textwidth]{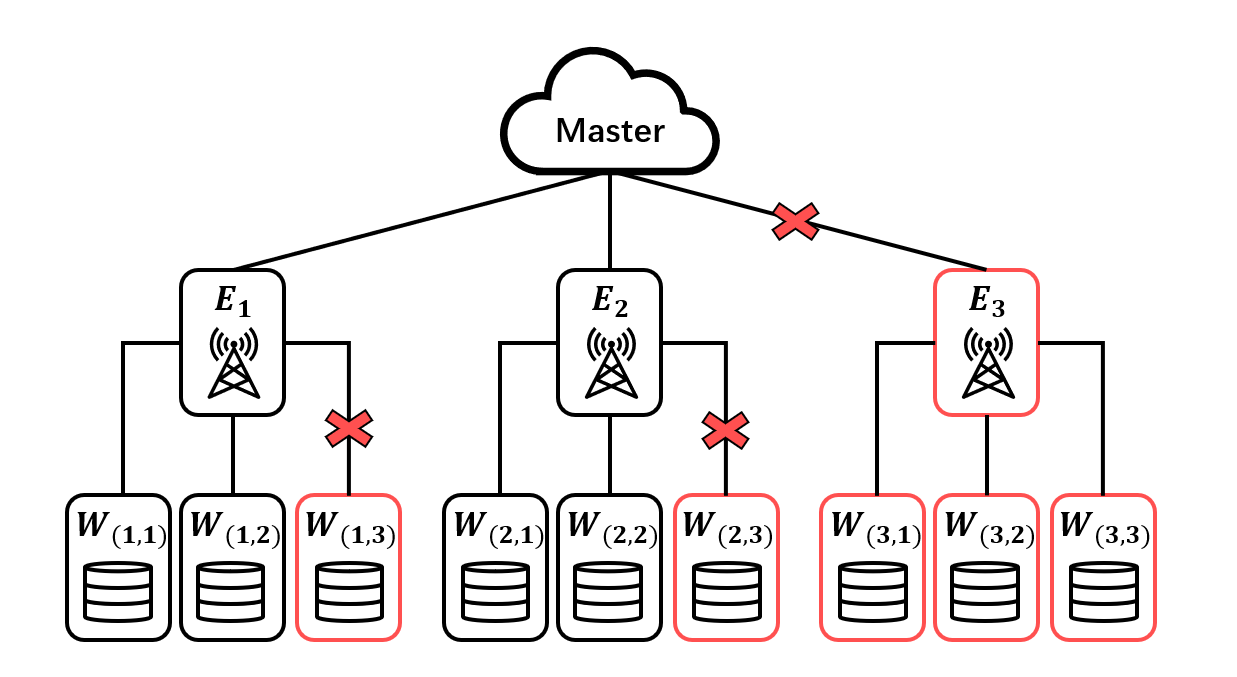}
	\caption{A hierarchical distributed system with straggling edge node $E_{3}$, worker $W_{(1,3)}$, and worker $W_{(2,3)}$.}
 \label{fig2}
\end{figure}

\textbf{Example 1 (continued)}: Considering the hierarchical distributed learning system illustrated in Fig. \ref{fig2}, in edge node $E_{1}$, worker $W_{(1,3)}$ is the straggler, then workers $W_{(1,1)}$ and $W_{(1,2)}$ encode the partial gradients as
\begin{align*}
    G_{11}&=\frac{1}{4}g_{1}+\frac{1}{4}g_{2}+\frac{1}{2}g_{3}+g_{4},\\ G_{12}&=\frac{1}{4}g_{1}+\frac{1}{4}g_{2}+g_{5}+g_{6}.
\end{align*}
Hence, edge node $E_{1}$ can decode
$$G_{1}=G_{11}+G_{12}=\frac{1}{2}g_{1}+\frac{1}{2}g_{2}+\frac{1}{2}g_{3}+g_{4}+g_{5}+g_{6}.$$
Similarly, for edge node $E_{2}$, it can decode
$$G_{2}=G_{21}+G_{22}=\frac{1}{2}g_{1}+\frac{1}{2}g_{2}+\frac{1}{2}g_{3}+g_{7}+g_{8}+g_{9},$$ from $G_{21}=\frac{1}{4}g_{1}+\frac{1}{4}g_{2}+\frac{1}{2}g_{3}+g_{7}$ and $G_{22}=\frac{1}{4}g_{1}+\frac{1}{4}g_{2}+g_{8}+g_{9}$, without receiving worker $W_{(2,3)}$'s encoded result. For the master, while $E_{3}$ is the edge straggler, it can decode the full gradient $g=G_{1}+G_{2}$ at the end.

\section{Runtime Analysis and Design Optimization}\label{set5}
In this section, we first model the total runtime of each iteration in the proposed hierarchical coded distributed learning system. Subsequently, runtime performance analysis based on our modeling is conducted for fully homogeneous hierarchical coded distributed learning systems with two special cases. After that, to further improve the performance of our proposed framework in heterogeneous scenarios, we formulate a jointly node and coding scheme selection problem (JNCSS) and develop an optimization algorithm for heterogeneous hierarchical coded distributed learning systems, aiming to minimize the expected execution time for each iteration in the learning process.
\subsection{Runtime Model}\label{set51}
To begin with, in this subsection, we formulate the runtime model of each iteration in the proposed hierarchical coded distributed learning system.

For worker $W_{(i,j)}$, we assume a shifted exponential model for its gradient computation. More specifically, computation time of worker $W_{(i,j)}$ can be expressed as
\begin{equation}
    T^{(i,j)}_{cmp}= T^{(i,j)}_{cmp,1}+ T^{(i,j)}_{cmp,2}.
\end{equation}
$T^{(i,j)}_{cmp,1}$ above represents the deterministic computation time to process the data, which is proportional to the number of sub-datasets. Hence, we have $T^{(i,j)}_{cmp,1}=c_{(i,j)}D$. And $T^{(i,j)}_{cmp,2}$ represents the stochastic computation time due to random memory access during read/write cycles associated with Multiply-Accumulate (MAC) operations, which follows an exponential distribution, i.e., $f_{T^{(i,j)}_{cmp,2}}(t)=\gamma_{(i,j)}e^{-\gamma_{(i,j)}t},t\geq 0$. And the average computation time of worker $W_{(i,j)}$ is given by $\mathbb{E}(T^{(i,j)}_{cmp})=c_{(i,j)}D+\frac{1}{\gamma_{(i,j)}}$.

In addition to computing the gradient results, worker $W_{(i,j)}$ communicates with its edge node $E_{i}$ for data downloading and uploading during each iteration. For the communication time between edge node $E_{i}$ and worker $W_{(i,j)}$, we denote them as $T^{(i,j)}_{com,d}$ and $T^{(i,j)}_{com,u}$, and we have
\begin{equation}
T^{(i,j)}_{com,d}=N^{d}_{(i,j)}\tau_{(i,j)},T^{(i,j)}_{com,u}=N^{u}_{(i,j)}\tau_{(i,j)},
\end{equation}
where $\tau_{(i,j)}$ is the deterministic time of worker $W_{(i,j)}$ to download or upload the gradient results with the assumption that downloading and uploading delays are reciprocal. $N^{d}_{(i,j)}$ and $N^{u}_{(i,j)}$ denote the number of transmissions for successful downloading and uploading. Considering that the communication link is unreliable and may fail especially in wireless scenarios, workers or edge nodes retransmit until the successful transmission occurs. We assume a geometric distribution for the number of transmissions, i.e., $Pr(N^{d}_{(i,j)}=x)=Pr(N^{u}_{(i,j)}=x)=p_{(i,j)}^{x-1}(1-p_{(i,j)}),x\in\mathbb{Z}^{+}$. And the average downloading and uploading time between worker $W_{(i,j)}$ and edge node $E_{i}$ are given by $\mathbb{E}(T^{(i,j)}_{com,d})=\mathbb{E}(T^{(i,j)}_{com,u})=\frac{\tau_{(i,j)}}{1-p_{(i,j)}}$.

Similarly, edge node $E_{i}$ also communicates with the central master for data downloading and uploading during each iteration. Denoting the communication time as $T^{i}_{com,d}$ and $T^{i}_{com,u}$, we have
\begin{equation}
    T^{i}_{com,d}=N^{d}_{i}\tau_{i},T^{i}_{com,u}=N^{u}_{i}\tau_{i}.
\end{equation}
Here, $\tau_{i}$ is the deterministic time of edge node $E_{i}$ to download or upload the gradient results. $N^{d}_{i}$ and $N^{u}_{i}$ denote the number of transmissions for successful downloading and uploading, which also follow the geometric distribution, i.e., $Pr(N^{d}_{i}=x)=Pr(N^{u}_{i}=x)=p_{i}^{x-1}(1-p_{i}),x\in\mathbb{Z}^{+}$. And the average downloading and uploading time between edge node $E_{i}$ and the master are given by $\mathbb{E}(T^{i}_{com,d})=\mathbb{E}(T^{i}_{com,u})=\frac{\tau_{i}}{1-p_{i}}$.

Hence, during each iteration, the total runtime of worker $W_{(i,j)}$ to download aggregated model from the central master, compute to update the latest gradient, and upload to its edge node can be expressed as
\begin{equation}
    T^{(i,j)}_{tol}=T^{i}_{com,d}+T^{(i,j)}_{com,d}+T^{(i,j)}_{cmp}+T^{(i,j)}_{com,u}.
\end{equation}

If the proposed hierarchical gradient coding scheme is considered, when $s_{w}$ stragglers of workers are tolerated in edge node $E_{i}$, the total runtime to return its results to the master is
\begin{equation}
    T^{i}_{tol}=T^{i}_{com,u}+\min_{(m_{i}-s_{w})-\mathrm{th}}\left\{T^{(i,j)}_{tol}:\forall j\in[m_{i}]\right\},
\end{equation}
where $\min\limits_{(m_{i}-s_{w})-\mathrm{th}}(\cdot)$ returns the $(m_{i}-s_{w})$-th minimum value of a set. For example, $\min\limits_{3-\mathrm{th}}\{3,4,5,6\}=5$. Moreover, when $s_{e}$ edge stragglers are tolerated, the total runtime for the hierarchical coded distributed learning system to compute and recover the full gradient is
\begin{equation}
    T_{tol}=\min_{(n-s_{e})-\mathrm{th}}\left\{T^{i}_{tol}:\forall i\in[n]\right\}.
\end{equation}
\subsection{Analysis of Homogeneous Scenarios}
In this subsection, we focus on fully homogeneous hierarchical coded distributed learning systems, where all the workers and edge nodes have similar computation and communication capacities. More precisely, we introduce two special cases aimed at minimizing the expected runtime of each iteration by choosing appropriate stragglers tolerance $(s_{e},s_{w})$.

\textbf{Case 1 (Computation-dominated)}: Considering the scenario that $p_{1}\approx 0$ and $p_{2}\approx 0$, which indicates the downloading and uploading times are fixed parameters as represented by $T^{(i,j)}_{com,d}=T^{(i,j)}_{com,u}=\tau_{1}$, $T^{i}_{com,d}=T^{i}_{com,u}=\tau_{2}$. Hence, the computation time of each worker dominates the total runtime. If any $s_{e}$ edge stragglers and any $s_{w}$ stragglers of workers in each edge node can be tolerated with the proposed hierarchical gradient coding scheme, the total runtime of each iteration will be
\begin{equation}
\begin{aligned}
T_{tol}=cK\frac{(s_{e}+1)(s_{w}+1)}{nm}+2\tau_{1}+2\tau_{2}+\max_{i\in\mathcal{F},j\in\mathcal{F}_{i}}T^{(i,j)}_{cmp,2},
\end{aligned}
\end{equation}
where statistical variable $\max_{i\in\mathcal{F},j\in\mathcal{F}_{i}}T^{(i,j)}_{cmp,2}$ is the maximum of $(n-s_{e})(m-s_{w})$ independent and identical exponential distribution variables, and it is well known and easily derived that $\mathbb{E}(\max_{i\in\mathcal{F},j\in\mathcal{F}_{i}}(T^{(i,j)}_{cmp,2}))\approx \frac{1}{\gamma}\ln (n-s_{e})(m-s_{w})$. Hence, we have
\begin{equation}\label{case1}
\begin{aligned}
    \mathbb{E}(T_{tol})\approx & cK \frac{(s_{e}+1)(s_{w}+1)}{nm}+2\tau_{1}+2\tau_{2}
    \\& +\frac{1}{\gamma}\ln (n-s_{e})(m-s_{w}).
\end{aligned}
\end{equation}

Viewing (\ref{case1}) as a binary function of $s_{e}$ and $s_{w}$, for any $s_{e}$, we observe that function (\ref{case1}) is monotonically increasing when $s_{e}\leq n-\frac{1}{\gamma}\frac{nm}{cK(s_{w}+1)}$, and monotonically decreasing when $s_{e}\textgreater  n-\frac{1}{\gamma}\frac{nm}{cK(s_{w}+1)}$. As for any $s_{w}$, function (\ref{case1}) is also monotonically increasing when $s_{w}\leq m-\frac{1}{\gamma}\frac{nm}{cK(s_{e}+1)}$, and monotonically decreasing when $s_{w}\textgreater  m-\frac{1}{\gamma}\frac{nm}{cK(s_{e}+1)}$. Notice that $s_{e}\in[0\colon n)$ and $s_{w}\in[0\colon m)$ in the proposed coding scheme, the expected runtime takes the minimum when stragglers tolerance $(s_{e},s_{w})$ is at the four endpoints of its domain, i.e. $(0,0)$, $(n-1,0)$, $(0,m-1)$ and $(n-1,m-1)$.

To simplify notations, we denote $C_{1}=\min\{cK,\frac{cK}{m}+\frac{\ln m}{\gamma},\frac{cK}{n}+\frac{\ln n}{\gamma},\frac{cK}{nm}+\frac{\ln nm}{\gamma}\}$. If $\frac{cK}{nm}+\frac{\ln nm}{\gamma}=C_{1}$, we should choose $s_{e}=0$ and $s_{w}=0$ to minimize $\mathbb{E}(T_{tol})$, and we have
\begin{equation*}
    \min{\mathbb{E}(T_{tol})}=\frac{cK}{nm}+2\tau_{1}+2\tau_{2}+\frac{\ln nm}{\gamma}.
\end{equation*}
Similarly, if $\frac{cK}{m}+\frac{\ln m}{\gamma}=C_{1}$, we should choose $s_{e}=n-1$ and $s_{w}=0$ resulting in $\min{\mathbb{E}(T_{tol})}=\frac{cK}{m}+2\tau_{1}+2\tau_{2}+\frac{\ln m}{\gamma}$. If $\frac{cK}{n}+\frac{\ln n}{\gamma}=C_{1}$, we should choose $s_{e}=0$ and $s_{w}=m-1$ resulting in $\min{\mathbb{E}(T_{tol})}=\frac{cK}{n}+2\tau_{1}+2\tau_{2}+\frac{\ln n}{\gamma}$. Otherwise, we should choose $s_{e}=n-1$ and $s_{w}=m-1$ resulting in $\min{\mathbb{E}(T_{tol})}=cK+2\tau_{1}+2\tau_{2}$.

\textbf{Case 2 (Communication-dominated)}: Considering the scenario that $p_{1}\approx 0$ and $\gamma\to +\infty$, which indicates that $T^{(i,j)}_{com,d}=T^{(i,j)}_{com,u}=\tau_{1}$, $T^{(i,j)}_{cmp,2}=0$. The communication time between each edge node and the master dominates the total runtime. Hence, the total runtime of each iteration is
\begin{equation}
    T_{tol}=cK \frac{(s_{e}+1)(s_{w}+1)}{nm}+2\tau_{1}+\tau_{2}\max_{i\in\mathcal{F}}(N^{u}_{i}+N^{d}_{i}).
\end{equation}

Different from Case 1, both $N^{u}_{i}$ and $N^{d}_{i}$ are geometric distribution variables in this case. In \cite{eisenberg2008expectation}, it has been derived that for $n$ independent and identical geometric distribution variables, the expected value of their maximum can be approximated as $\frac{1}{2}-\frac{1}{\ln p}n$. Hence, we have
\begin{equation}\label{case2}
    \mathbb{E}(T_{tol})\approx cK \frac{(s_{e}+1)(s_{w}+1)}{nm}+2\tau_{1}+\tau_{2}-\frac{2\tau_{2}}{\ln p_{2}}\ln (n-s_{e}).
\end{equation}

According to (\ref{case2}), it is apparent that we should choose $s_{w}=0$, since in such a case we should minimize the computational loads of each worker as much as possible. And the expected total runtime is given by
\begin{equation}\label{case2.1}
    \mathbb{E}(T_{tol})=cK\frac{s_{e}+1}{nm}+2\tau_{1}+\tau_{2}-\frac{2\tau_{2}}{\ln p_{2}}\ln (n-s_{e}).
\end{equation}

Viewing (\ref{case2.1}) as a function of $s_{e}$, it is easy to show that the expected runtime takes the minimum at the endpoints of the domain. When $\frac{cK}{m}\geq \frac{cK}{nm}-\frac{2\tau_{2}}{\ln p_{2}}\ln n$, we should choose $s_{e}=0$. And we have $\min\mathbb{E}(T_{tol})=\frac{cK}{nm}+2\tau_{1}+\tau_{2}-\frac{2\tau_{2}}{\ln p_{2}}\ln n$. Otherwise, we should choose $s_{e}=n-1$, which results in $\min\mathbb{E}(T_{tol})=\frac{cK}{m}+2\tau_{1}+\tau_{2}$.

\subsection{Optimization of Heterogeneous Scenarios}\label{set53}
In the previous subsection, we consider fully homogeneous scenarios with related analysis. Accordingly, we can observe that inappropriate selection of stragglers tolerance level does not decrease the expected runtime, and may even underperform the uncoded scheme.

However, conducting similar analysis for heterogeneous hierarchical coded distributed learning systems is more complex and challenging due to heterogeneity in computation and communication capabilities across edge nodes and workers, which is the second challenge in hierarchical architecture as discussed in Section \ref{set1}. Hence, to further improve the performance of our proposed framework in heterogeneous scenarios, we formulate a jointly node and coding scheme selection problem (JNCSS), which aims to minimize the total runtime $T_{tol}$ of each iteration.

We first introduce two sets of optimization variables $e_{i}$ and $w_{(i,j)}$. Variable $e_{i}$ indicates whether edge node $E_{i}$ is edge straggler ($e_{i}=0$) or not ($e_{i}=1$), and variable $w_{(i,j)}$ indicates whether worker $W_{(i.j)}$ is straggler ($w_{(i,j)}=0$) or not ($w_{(i,j)}=1$). Hence, we have
\begin{equation}\label{opt1}
    e_{i}\in\{0,1\},\ \ \forall i\in\left[n\right],
\end{equation}
\begin{equation}\label{opt2}
    w_{(i,j)}\in\{0,1\},\ \ \forall i\in\left[n\right],j\in\left[m_{i}\right].
\end{equation}
Then, the total number of edge non-stragglers is $f_{e}=n-s_{e}$, which means
\begin{equation}\label{opt3}
    \sum_{i=1}^{n}e_{i}=n-s_{e}.
\end{equation}
As for edge node $E_{i}$, all the workers connecting to it will be stragglers when $e_{i}=0$, else the number of non-stragglers will be $f_{w}^{i}=m_{i}-s_{w}$. Consequently, we have
\begin{equation}\label{opt4}
    \sum_{j=1}^{m_{i}}w_{(i,j)}=e_{i}(m_{i}-s_{w}),\ \ \forall i\in\left[n\right].
\end{equation}

Recall our the runtime model in Section \ref{set51}, we make the following approximations $T^{(i,j)}_{tol}\approx c_{(i,j)}D+\frac{1}{\gamma_{(i,j)}}+\frac{2\tau_{(i,j)}}{1-p_{(i,j)}}+\frac{\tau_{(i)}}{1-p_{(i)}}$ and $T^{(i)}_{com,u}\approx\frac{\tau_{(i)}}{1-p_{(i)}}$. And the total runtime is decided by the edge nodes and workers with the longest runtime among non-stragglers. Hence, for all the $i\in\left[n\right],j\in\left[m_{i}\right]$, we have:
\begin{equation}\label{opt5}
\begin{aligned}
    T_{tol}\geq & w_{(i,j)}\left(c_{(i,j)}D+\frac{1}{\gamma_{(i,j)}}+\frac{2\tau_{(i,j)}}{1-p_{(i,j)}}+\frac{\tau_{(i)}}{1-p_{(i)}}\right)
    \\ & +e_{i}\frac{\tau_{(i)}}{1-p_{(i)}}.
\end{aligned}
\end{equation}
computational loads $D$ above is determined by
\begin{equation}\label{opt6}
    D=K\frac{(s_{e}+1)(s_{w}+1)}{\sum^{n}_{i=1}m_{i}}.
\end{equation}
For convenience, we denote $m=\min_{i}m_{i}$, and we have the domain of the stragglers tolerance:
\begin{equation}\label{opt7}
    s_{e}\in[0\colon n),
\end{equation}
\begin{equation}\label{opt8}
    s_{w}\in[0\colon m).
\end{equation}
Hence, we formulate a jointly node and coding scheme selection problem (JNCSS) as follows:
\begin{equation}
    \begin{aligned}
        \mathcal{P}_{1}:&\min_{s_{w},s_{e},\boldsymbol{e},\boldsymbol{w}}\ \ T_{tol},\\
        &\ \ \ \ \text{s.t.}\ \ (\ref{opt1})-(\ref{opt8}).
    \end{aligned}
\end{equation}

In fact, the solution value of $\mathcal{P}_{1}$ represents the best expected performance of the proposed coding scheme when it tolerates corresponding $s_{e}$ and $s_{w}$ stragglers in edge node layer and worker layer, respectively. And such a best-case may occur in most of the time, especially in heterogeneous scenarios.

And we propose Algorithm \ref{alg2} to optimize the JNCSS problem as below. During each iteration, the algorithm starts by calculating $B_{(i,j)}=c_{(i,j)}D+\frac{1}{\gamma_{(i,j)}}+\frac{2\tau_{(i,j)}}{1-p_{(i,j)}}+\frac{\tau_{(i)}}{1-p_{(i)}}$ and sorts $B_{(i,j)}$ to choose the $(m_{i}-s_{w})$-th smallest one (Line 3-6). Similarly, it calculates $A_{i}=\frac{\tau_{(i)}}{1-p_{(i)}}$ and sorts $A_{i}+\min\limits_{(m_{i}-s_{w})-\mathrm{th}}B_{(i,j)}$ to choose the $(n-s_{e})$-th smallest one (Line 7-9). Then, the algorithm outputs $\hat{T}_{tol}=\min\limits_{s_{e},s_{w}}\hat{T}_{tol}(s_{e},s_{w})$ and $(\hat{s}_{e},\hat{s}_{w})=\mathop{\arg\max}\limits_{s_{e},s_{w}}\hat{T}_{tol}(s_{e},s_{w})$ (Line 11-12). Finally, the corresponding variables $e_{i}$ and $w_{(i,j)}$ are set to 1 (Line 13-21). Since Algorithm \ref{alg2} optimizes JNCSS with a greedy manner, we can obtain its complexity as $\mathcal{O}(n^{2}m^{3}+n^{3}m)$ in the worst case. Subsequently, we provide theoretical analysis of the results returned by Algorithm \ref{alg2}.

\begin{algorithm}[t]
    \caption{Solving JNCSS}
    \begin{algorithmic}[1]
        \REQUIRE $n,m_{i},K,c_{(i,j)},\gamma_{(i,j)},\tau_{(i,j)},p_{(i,j)},\tau_{(i)},p_{(i)}$
        \ENSURE $\hat{s}_{e},\hat{s}_{w},\hat{T}_{tol},e_{i},w_{(i,j)}$
            \FOR{$s_{e}\in[0\colon n),s_{w}\in[0\colon m)$}
            \FOR{$i\in[n]$}
            \FOR{$j\in[m_{i}]$}
            \STATE Calculate $B_{(i,j)}=c_{(i,j)}D+\frac{1}{\gamma_{(i,j)}}+\frac{2\tau_{(i,j)}}{1-p_{(i,j)}}+\frac{\tau_{(i)}}{1-p_{(i)}}$.
            \ENDFOR
            \STATE Sort $B_{(i,j)}$ and choose $\min\limits_{(m_{i}-s_{w})-\mathrm{th}}B_{(i,j)}$.
            \STATE Calculate $A_{i}=\frac{\tau_{(i)}}{1-p_{(i)}}$.
            \ENDFOR
            \STATE Sort $A_{i}+\min\limits_{(m_{i}-s_{w})-\mathrm{th}}B_{(i,j)}$ and choose $\hat{T}_{tol}(s_{e},s_{w})=\min\limits_{(n-s_{e})-\mathrm{th}}(A_{i}+\min\limits_{(m_{i}-s_{w})-\mathrm{th}}B_{(i,j)})$.
            \ENDFOR
            \STATE $\hat{T}_{tol}=\min\limits_{s_{e},s_{w}}\hat{T}_{tol}(s_{e},s_{w})$.
            \STATE $(\hat{s}_{e},\hat{s}_{w})=\mathop{\arg\max}\limits_{s_{e},s_{w}}\hat{T}_{tol}(s_{e},s_{w})$.
            \FOR{$i\in[n]$}
            \IF{$A_{i}+\min\limits_{(m_{i}-\hat{s}_{w})-\mathrm{th}}B_{(i,j)}\leq \hat{T}_{tol}$}
            \STATE $e_{i}=1$.
            \FOR{$j\in[m_{i}]$}
            \IF{$B_{(i,j)}\leq\min\limits_{(m_{i}-s_{w})-\mathrm{th}}B_{(i,j)}$}
            \STATE $w_{(i,j)}=1$.
            \ENDIF
            \ENDFOR
            \ENDIF
            \ENDFOR
    \end{algorithmic}
    \label{alg2}
\end{algorithm}

\begin{theorem}\label{theo2}
The output of Algorithm \ref{alg2}, $\hat{T}_{tol}$ is the optimal value of the JNCSS problem and the corresponding nodes selection and coding scheme achieves the optimum.
\end{theorem}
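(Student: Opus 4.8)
The plan is to show that $\mathcal{P}_1$ separates into an (unconstrained) choice of the pair $(s_e,s_w)$ and, for each fixed pair, a nested subset-selection problem that the greedy rule in Algorithm~\ref{alg2} solves exactly; minimizing over the finitely many admissible pairs then gives the global optimum of $\mathcal{P}_1$.

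First I would fix $(s_e,s_w)\in[0\colon n)\times[0\colon m)$. By (\ref{opt6}) the load $D$ is then constant, so the quantities $A_i=\frac{\tau_{(i)}}{1-p_{(i)}}$ and $B_{(i,j)}=c_{(i,j)}D+\frac{1}{\gamma_{(i,j)}}+\frac{2\tau_{(i,j)}}{1-p_{(i,j)}}+\frac{\tau_{(i)}}{1-p_{(i)}}$ used by Algorithm~\ref{alg2} are constants, with $B_{(i,j)}\ge A_i>0$. Since $m_i-s_w\ge 1$, constraint (\ref{opt4}) forces $e_i=1$ whenever some $w_{(i,j)}=1$, and when $e_i=1$ exactly $f_w^i=m_i-s_w$ workers of $E_i$ are non-stragglers; combined with (\ref{opt3}), a feasible $(\boldsymbol{e},\boldsymbol{w})$ is thus in bijection with a set $\mathcal{F}\subseteq[n]$ of size $n-s_e$ together with, for each $i\in\mathcal{F}$, a set $\mathcal{F}_i\subseteq[m_i]$ of size $m_i-s_w$. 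Because $A_i+B_{(i,j)}\ge A_i\ge 0$ and every nonempty $\mathcal{F}_i$ supplies at least one such term, the instances of (\ref{opt5}) with $w_{(i,j)}=0$ (and the $e_i=0$ instances) are dominated; hence at an optimum of $T_{tol}$ the constraints (\ref{opt5}) reduce to $T_{tol}=\max_{i\in\mathcal{F}}\max_{j\in\mathcal{F}_i}\bigl(A_i+B_{(i,j)}\bigr)$, so the optimal value for this fixed $(s_e,s_w)$ is $\min_{\mathcal{F},\{\mathcal{F}_i\}}\max_{i\in\mathcal{F}}\max_{j\in\mathcal{F}_i}\bigl(A_i+B_{(i,j)}\bigr)$.

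Next I would evaluate that inner optimum. As the sets $\mathcal{F}_i$ are chosen independently across $i$, the minimization over $\{\mathcal{F}_i\}$ passes inside $\max_{i\in\mathcal{F}}$, yielding $\min_{\mathcal{F}}\max_{i\in\mathcal{F}}\bigl(A_i+\min_{\mathcal{F}_i}\max_{j\in\mathcal{F}_i}B_{(i,j)}\bigr)$. I would then invoke the elementary fact that, for reals $x_1,\dots,x_M$ and $1\le k\le M$, $\min_{|S|=k}\max_{t\in S}x_t$ equals the $k$-th smallest of $x_1,\dots,x_M$ — the $k$ smallest values attain it, and any $k$-element set contains an element no smaller than the $k$-th order statistic. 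Applying this to the inner maximum gives $\min_{\mathcal{F}_i}\max_{j\in\mathcal{F}_i}B_{(i,j)}=\min_{(m_i-s_w)\text{-th}}B_{(i,j)}$, and applying it again to the outer maximum over $\mathcal{F}$ gives $\min_{(n-s_e)\text{-th}}\bigl(A_i+\min_{(m_i-s_w)\text{-th}}B_{(i,j)}\bigr)$, which is exactly the value $\hat{T}_{tol}(s_e,s_w)$ assembled in Lines~3--9. The only remaining constraints, (\ref{opt7})--(\ref{opt8}), simply let $(s_e,s_w)$ range over $[0\colon n)\times[0\colon m)$, which the outer loop enumerates exhaustively, so $\hat{T}_{tol}=\min_{s_e,s_w}\hat{T}_{tol}(s_e,s_w)$ is the optimal value of $\mathcal{P}_1$.

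Finally I would check that Lines~13--21 output a feasible point attaining this value: taking $(\hat{s}_e,\hat{s}_w)$ to be a pair achieving the minimum in Line~11, set $e_i=1$ exactly for the $n-\hat{s}_e$ edge nodes realizing the $(n-\hat{s}_e)$ smallest values of $A_i+\min_{(m_i-\hat{s}_w)\text{-th}}B_{(i,j)}$, and for each such $i$ set $w_{(i,j)}=1$ for the $m_i-\hat{s}_w$ workers with the smallest $B_{(i,j)}$; a direct substitution confirms (\ref{opt1})--(\ref{opt8}) and shows the induced $T_{tol}$ equals $\hat{T}_{tol}$, which matches the lower bound already established, so the point is optimal. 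I expect the main obstacle to be the reformulation in the first step — pinning down exactly which instances of the coupling constraint (\ref{opt5}) are binding and establishing the bijection between feasible $(\boldsymbol{e},\boldsymbol{w})$ and nested subset choices; once that is in place, the decoupling across $i$ and across $(s_e,s_w)$, together with the $k$-subset min--max fact, make the greedy rule provably exact, and the only remaining care is breaking the ties in the "$\le$" tests of Lines~14 and~17 so that exactly $n-\hat{s}_e$ edge nodes and $m_i-\hat{s}_w$ workers per retained node are selected.
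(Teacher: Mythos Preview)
Your proposal is correct and follows essentially the same approach as the paper: both fix $(s_e,s_w)$, reduce $\mathcal{P}_1$ to the nested subset-selection sub-problem $\mathcal{P}_2$, show that the greedy (smallest-first) selection is optimal there, and then exhaust over the finitely many pairs. The only cosmetic difference is that the paper phrases the optimality of the greedy choice for $\mathcal{P}_2$ as a short proof by contradiction on the sorted indices, whereas you argue it directly via the order-statistic identity $\min_{|S|=k}\max_{t\in S}x_t=x_{(k)}$ and the decoupling of the $\mathcal{F}_i$'s; the substance is the same.
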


Theorem \ref{theo2} is proven through a proof by contradiction, which is provided in Appendix \ref{apc}. Subsequently, we aim to prove the upper bound of the gap between the solution value of the JNCSS problem $\mathcal{P}_{1}$ and the expected value of the runtime which is formulated in Section \ref{set51}.

\begin{theorem}\label{t2}
The expected gap between the output of Algorithm \ref{alg2}, $\hat{T}_{tol}$, and runtime $T_{tol}$ formulated in Section \ref{set51}, is upper bounded by
\begin{equation}
    \mathbb{E}\left|T_{tol}-\hat{T}_{tol}\right|\leq f(n,n-\hat{s}_{e})\Delta_{e}+\max\limits_{i}f(m_{i},m_{i}-\hat{s}_{w})\Delta^{i}_{w}.
\end{equation}
In the inequality above, to simplify notations, we denote
\begin{equation}
    \begin{aligned}
        \Delta_{e}=\bigg\{ &\sum_{i}\left[\mathbb{V}^{2}\left[T_{tol}^{i}\right]+\mathbb{E}^{2}\left[T_{tol}^{i}-\frac{\sum T_{tol}^{i}}{n}\right]\right]
        \\&-n\mathbb{V}^{2}\left[\frac{\sum T_{tol}^{i}}{n}\right]\bigg\}^{\frac{1}{2}},
        \\
        \Delta_{w}^{i}=\bigg\{ &\sum_{j}\left[\mathbb{V}^{2}\left[T_{tol}^{(i,j)}\right]+\mathbb{E}^{2}\left[T_{tol}^{(i,j)}-\frac{\sum T_{tol}^{(i,j)}}{m_{i}}\right]\right]
        \\&-n\mathbb{V}^{2}\left[\frac{\sum T_{tol}^{(i,j)}}{m_{i}}\right]\bigg\}^{\frac{1}{2}}.
    \end{aligned}
\end{equation}
Stragglers tolerance $\hat{s}_{e}$ and $\hat{s}_{w}$ are the results returned by Algorithm \ref{alg2}.
\end{theorem}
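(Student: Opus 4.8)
\textit{Proof plan.} Throughout, fix the stragglers tolerance $(\hat{s}_{e},\hat{s}_{w})$ returned by Algorithm~\ref{alg2} and evaluate $T_{tol}$ at this tolerance. Write $\phi^{e}$ for the map $(u_{1},\dots,u_{n})\mapsto\min\limits_{(n-\hat{s}_{e})-\mathrm{th}}\{u_{i}\}$ and, for each $i$, $\phi^{i}$ for the map $(v_{1},\dots,v_{m_{i}})\mapsto\min\limits_{(m_{i}-\hat{s}_{w})-\mathrm{th}}\{v_{j}\}$. By the runtime model of Section~\ref{set51}, $T_{tol}=\phi^{e}\big((T^{i}_{tol})_{i}\big)$ with $T^{i}_{tol}=T^{i}_{com,u}+\phi^{i}\big((T^{(i,j)}_{tol})_{j}\big)$, whereas, since $\hat{T}_{tol}$ is $\mathcal{P}_{1}$ evaluated after replacing every random time by its mean (namely $B_{(i,j)}=\mathbb{E}[T^{(i,j)}_{tol}]$ in place of $T^{(i,j)}_{tol}$ and $A_{i}=\mathbb{E}[T^{i}_{com,u}]$ in place of $T^{i}_{com,u}$), one has $\hat{T}_{tol}=\phi^{e}\big((A_{i}+\phi^{i}((B_{(i,j)})_{j}))_{i}\big)$. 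The plan is to interpolate through $\tilde{T}_{tol}:=\phi^{e}\big((\mathbb{E}[T^{i}_{tol}])_{i}\big)$, the $(n-\hat{s}_{e})$-th smallest of the \emph{true} per-edge mean runtimes, and to use
\begin{equation}
\mathbb{E}\big|T_{tol}-\hat{T}_{tol}\big|\ \le\ \mathbb{E}\big|T_{tol}-\tilde{T}_{tol}\big|\ +\ \big|\tilde{T}_{tol}-\hat{T}_{tol}\big| ,
\end{equation}
where the first term is a pure edge-layer order-statistic fluctuation and the second a pure worker-layer one.

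The core ingredient is a Samuelson/Scott--type bound for order statistics: for any $r,k$ and any random vector $X=(X_{1},\dots,X_{r})$ with mean $\mu$ and sample mean $\bar{X}=\frac{1}{r}\sum_{\ell}X_{\ell}$,
\begin{equation}
\mathbb{E}\big|X_{(k)}-\mu_{(k)}\big|\ \le\ f(r,k)\Big(\mathbb{E}\textstyle\sum_{\ell}(X_{\ell}-\bar{X})^{2}\Big)^{1/2},
\end{equation}
$X_{(k)}$, $\mu_{(k)}$ denoting the $k$-th smallest coordinate of $X$, $\mu$. I would derive this from the elementary deterministic inequality $|a_{(k)}-\bar a|\le f(r,k)\big(\sum_{\ell}(a_{\ell}-\bar a)^{2}\big)^{1/2}$, valid for every $a\in\mathbb{R}^{r}$ with $\bar a$ the arithmetic mean, applied to $a=X$ and to $a=\mu$, together with $\mathbb{E}|\bar X-\bar\mu|\le\mathbb{V}[\bar X]$, Jensen/Cauchy--Schwarz ($\mathbb{E}\sqrt{Y}\le\sqrt{\mathbb{E}Y}$), and the identity
\begin{equation}
\mathbb{E}\textstyle\sum_{\ell}(X_{\ell}-\bar X)^{2}=\sum_{\ell}\mathbb{V}^{2}[X_{\ell}]+\sum_{\ell}\mathbb{E}^{2}\big[X_{\ell}-\bar\mu\big]-r\,\mathbb{V}^{2}[\bar X] ,
\end{equation}
proved by expanding $X_{\ell}-\bar X$ about $\mu_{\ell}$ and $\bar\mu$ (no independence is needed, $\mathbb{V}^{2}[\bar X]$ being the genuine variance of $\bar X$). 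The right-hand side of this identity equals $\Delta_{e}^{2}$ when $X=(T^{i}_{tol})_{i}$ (so $r=n$) and equals $(\Delta_{w}^{i})^{2}$ when $X=(T^{(i,j)}_{tol})_{j}$ (so $r=m_{i}$). Applying the displayed bound with $X=(T^{i}_{tol})_{i}$ and $k=n-\hat{s}_{e}$ therefore controls the first term: $\mathbb{E}|T_{tol}-\tilde{T}_{tol}|\le f(n,n-\hat{s}_{e})\Delta_{e}$.

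For the second term, note that $\tilde{T}_{tol}=\phi^{e}(\mu)$ and $\hat{T}_{tol}=\phi^{e}(\mu')$ with $\mu_{i}=\mathbb{E}[T^{i}_{com,u}]+\mathbb{E}\big[\phi^{i}((T^{(i,j)}_{tol})_{j})\big]$ and $\mu'_{i}=\mathbb{E}[T^{i}_{com,u}]+\phi^{i}\big((\mathbb{E}[T^{(i,j)}_{tol}])_{j}\big)$, so the $\mathbb{E}[T^{i}_{com,u}]$ parts cancel coordinatewise and $\|\mu-\mu'\|_{\infty}=\max_{i}\big|\mathbb{E}[\phi^{i}(X^{(i)})]-\phi^{i}(\mathbb{E}[X^{(i)}])\big|$ with $X^{(i)}=(T^{(i,j)}_{tol})_{j}$. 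Because the order-statistic map $\phi^{e}$ is $1$-Lipschitz in $\ell_{\infty}$, $|\tilde{T}_{tol}-\hat{T}_{tol}|\le\|\mu-\mu'\|_{\infty}$; and by $|\mathbb{E}Z|\le\mathbb{E}|Z|$ together with the same bound as above (now with $r=m_{i}$, $k=m_{i}-\hat{s}_{w}$), each summand satisfies $\big|\mathbb{E}[\phi^{i}(X^{(i)})]-\phi^{i}(\mathbb{E}X^{(i)})\big|\le f(m_{i},m_{i}-\hat{s}_{w})\Delta_{w}^{i}$. Hence $|\tilde{T}_{tol}-\hat{T}_{tol}|\le\max_{i}f(m_{i},m_{i}-\hat{s}_{w})\Delta_{w}^{i}$, a deterministic quantity; adding the two estimates yields the claimed bound.

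The step I expect to be the main obstacle is the order-statistic concentration lemma --- establishing the deterministic inequality $|a_{(k)}-\bar a|\le f(r,k)\big(\sum_{\ell}(a_{\ell}-\bar a)^{2}\big)^{1/2}$ with the precise coefficient $f(r,k)$ used in the statement (the $k$-free Samuelson constant $\sqrt{r-1}$ is too crude; the sharp $k$-dependent constant follows from a Cauchy--Schwarz split of the deviations lying above versus below $a_{(k)}$), and then checking that the three auxiliary contributions, $\mathbb{E}|\phi(X)-\bar X|$, $\mathbb{E}|\bar X-\bar\mu|$ and $|\bar\mu-\phi(\mu)|$, collapse into a single $f(r,k)$-multiple of $\Delta_{e}$, respectively $\Delta_{w}^{i}$. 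By comparison, the variance identity and the $\ell_{\infty}$-Lipschitz reduction are routine, and notably no independence among edge-node or worker runtimes is required anywhere.
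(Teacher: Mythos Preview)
Your approach is structurally the same as the paper's: the key tool is an order-statistic deviation bound (the paper's Lemma~1), applied once at the worker layer and once at the edge layer, with the variance identity you write producing exactly $\Delta_e$ and $\Delta_w^i$. You make explicit two steps the paper leaves implicit --- the intermediate quantity $\tilde{T}_{tol}=\phi^e\big((\mathbb{E}[T^i_{tol}])_i\big)$ and the $\ell_\infty$-Lipschitz property of $\phi^e$ --- whereas the paper passes from its worker-layer bound to the edge-layer one by a monotonicity argument that amounts to the same thing. One substantive point: the paper's Lemma~1 bounds $|\mathbb{E}[X_{(k)}]-\mu_{(k)}|$, not $\mathbb{E}|X_{(k)}-\mu_{(k)}|$, and its proof accordingly only establishes $|\mathbb{E}[T_{tol}-\hat{T}_{tol}]|\le\cdots$ (by bounding $\mathbb{E}[T_{tol}-\hat{T}_{tol}]$ and $\mathbb{E}[\hat{T}_{tol}-T_{tol}]$ separately) and then conflates this with the stated $\mathbb{E}|T_{tol}-\hat{T}_{tol}|$ at the end. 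Your stronger lemma form and the three-term split you flag as the main obstacle would indeed be needed to place the absolute value inside the expectation, and with a single $f(r,k)$ that collapse does not obviously go through; the difficulty you anticipate is genuine, but it is shared with the paper's own argument rather than a defect peculiar to your route.
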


Proof of Theorem \ref{t2} is provided in Appendix \ref{apa}. According to Theorem \ref{t2}, $\Delta_{e}$ and $\Delta_{w}^{i}$ represent the heterogeneity and discreteness of computation and communication capabilities of edge nodes and workers, in a hierarchical distributed learning system. According to Theorem \ref{t2}, decreasing values of $\Delta_{e}$ and $\Delta_{w}^{i}$ result in the reduction of the expected gap between $T_{tol}$ and the result of our proposed algorithm.

\section{Numerical Results}\label{set6}
\begin{figure*}[t]
	\centering
	\subfloat[MNIST and non-IID Level \uppercase\expandafter{\romannumeral1}]{
            \label{fig3a}
		\includegraphics[width=0.325\textwidth]{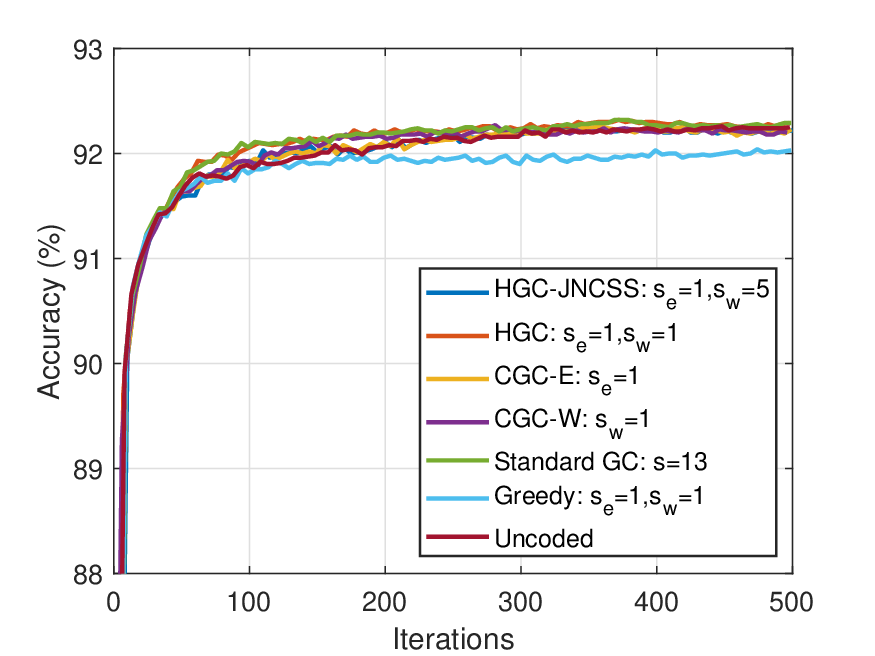}
	}
        \subfloat[MNIST and non-IID Level \uppercase\expandafter{\romannumeral2}]{
   		\label{fig3b}
            \includegraphics[width=0.325\textwidth]{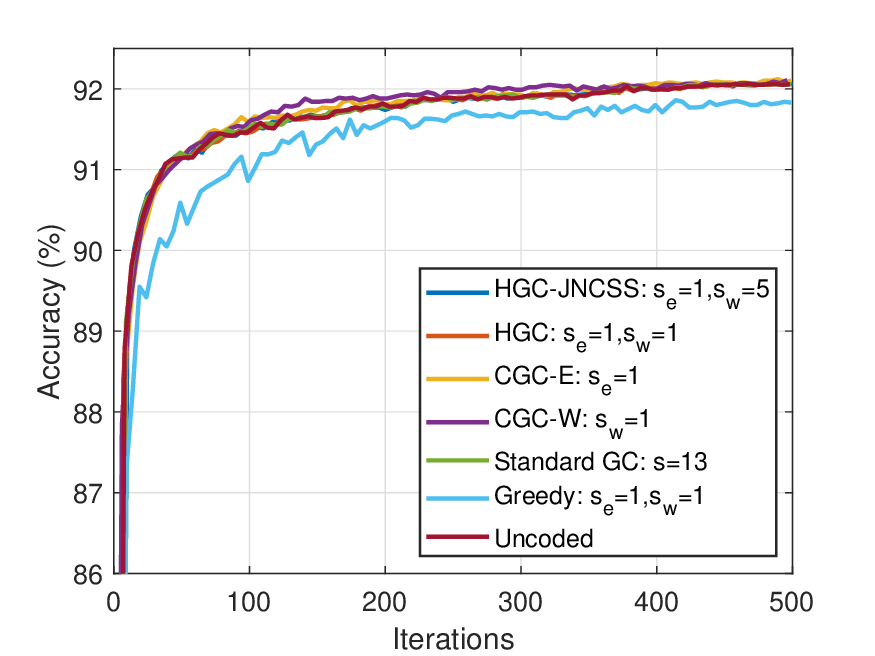}
    	}
        \subfloat[MNIST and non-IID Level \uppercase\expandafter{\romannumeral3}]{
            \label{fig3c}
	      \includegraphics[width=0.325\textwidth]{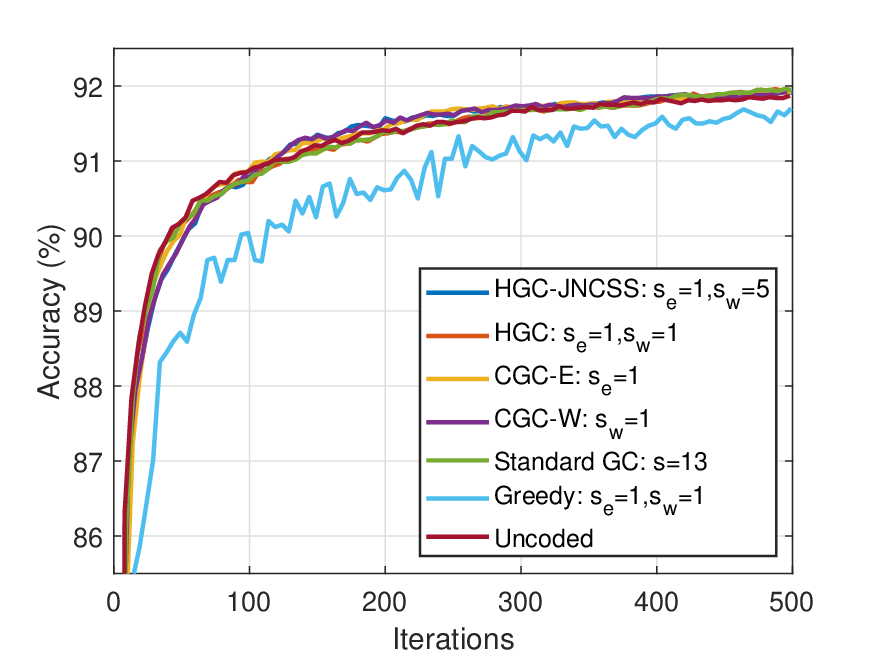}
	}
	\\
	\subfloat[CIFAR-10 and non-IID Level \uppercase\expandafter{\romannumeral1}]{
            \label{fig3d}
		\includegraphics[width=0.325\textwidth]{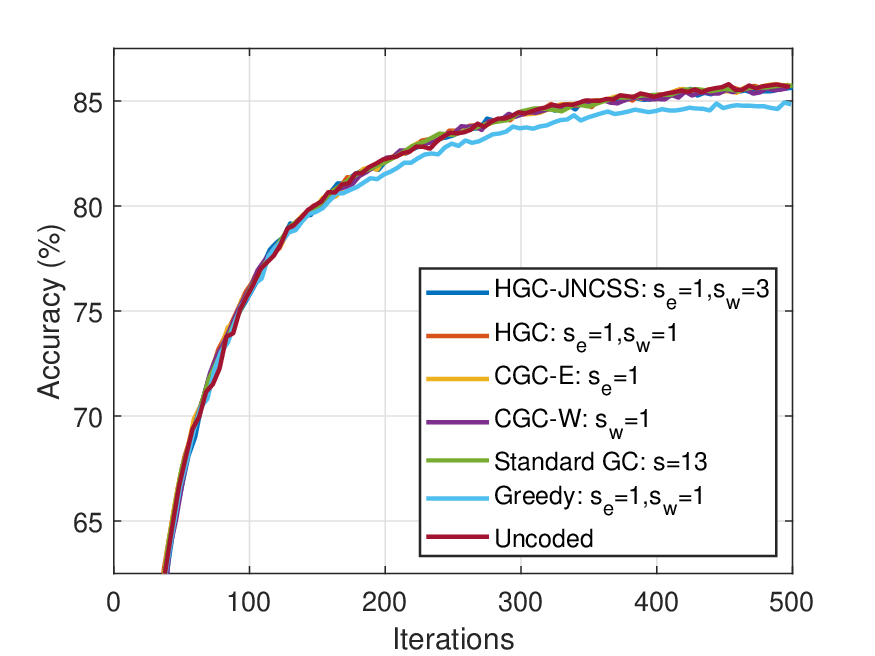}
	}
        \subfloat[CIFAR-10 and non-IID Level \uppercase\expandafter{\romannumeral2}]{
            \label{fig3e}
		\includegraphics[width=0.325\textwidth]{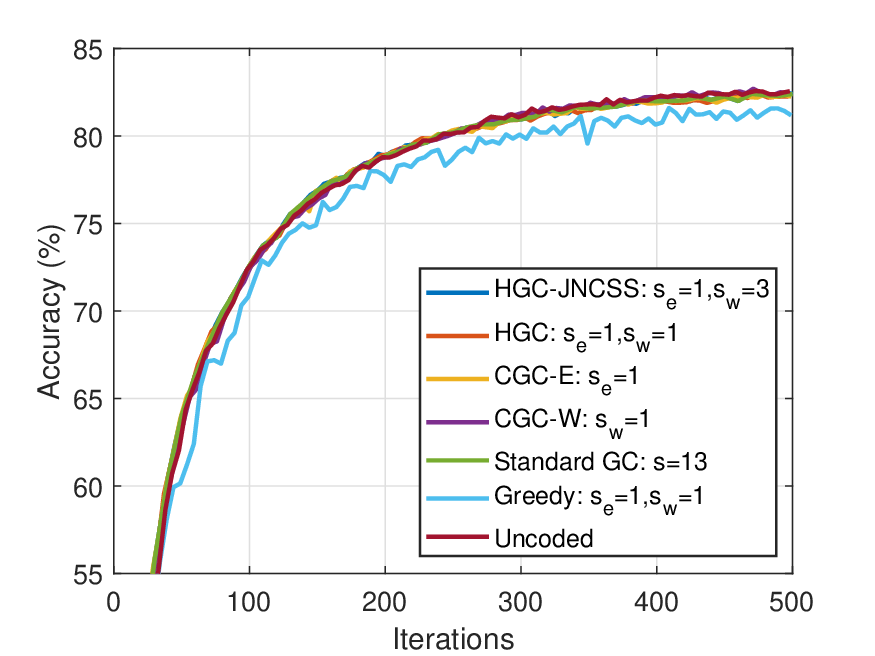}
    	}
        \subfloat[CIFAR-10 and non-IID Level \uppercase\expandafter{\romannumeral3}]{
            \label{fig3f}
		\includegraphics[width=0.325\textwidth]{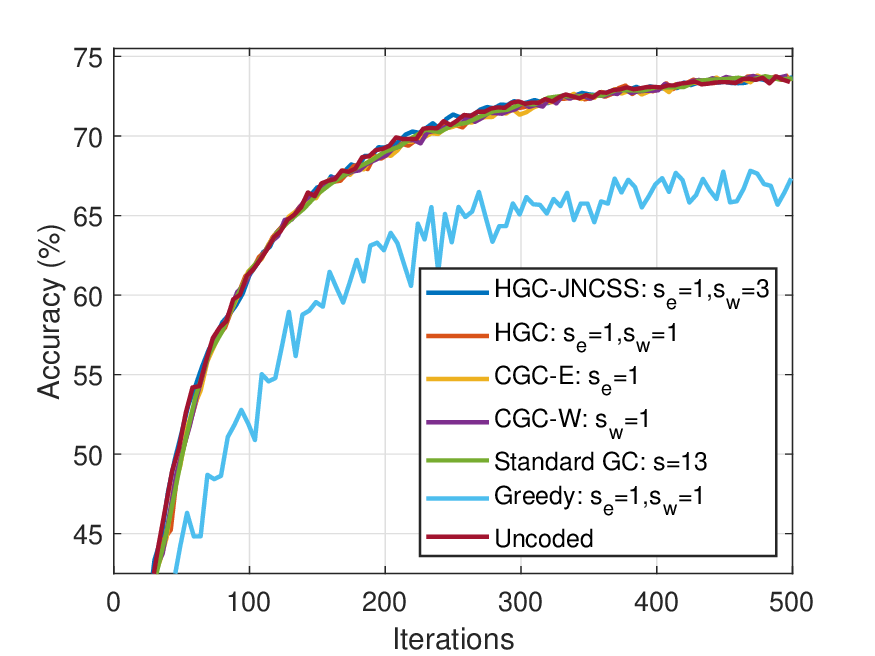}
    	}
	\caption{Test accuracy curves with respect to training iterations for different datasets and data non-IID levels.}
	\label{fig3}
\end{figure*}

\begin{figure*}[t]
	\centering
	\subfloat[MNIST and non-IID Level \uppercase\expandafter{\romannumeral1}]{
            \label{fig4a}
		\includegraphics[width=0.325\textwidth]{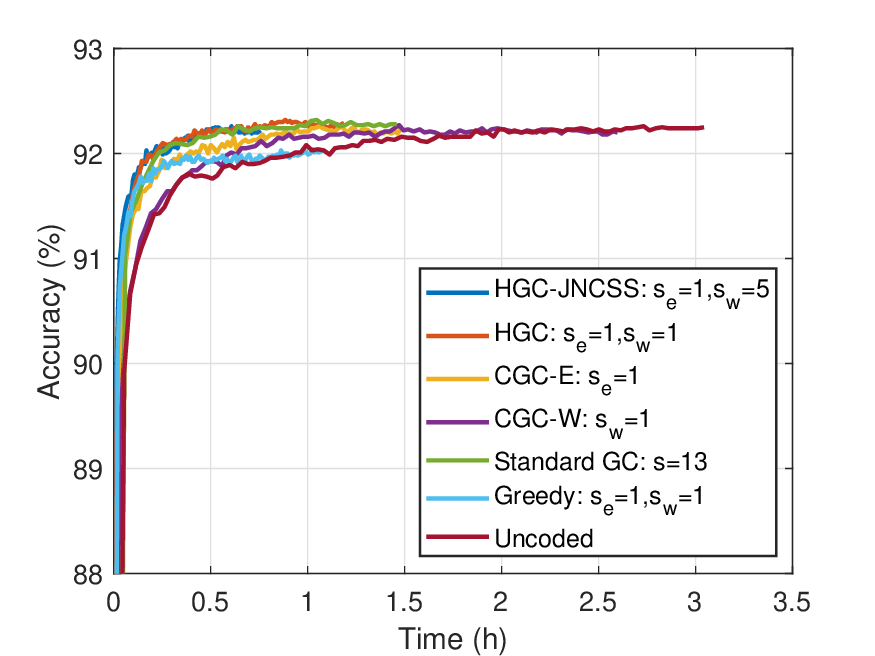}
	}
        \subfloat[MNIST and non-IID Level \uppercase\expandafter{\romannumeral2}]{
   		\label{fig4b}
            \includegraphics[width=0.325\textwidth]{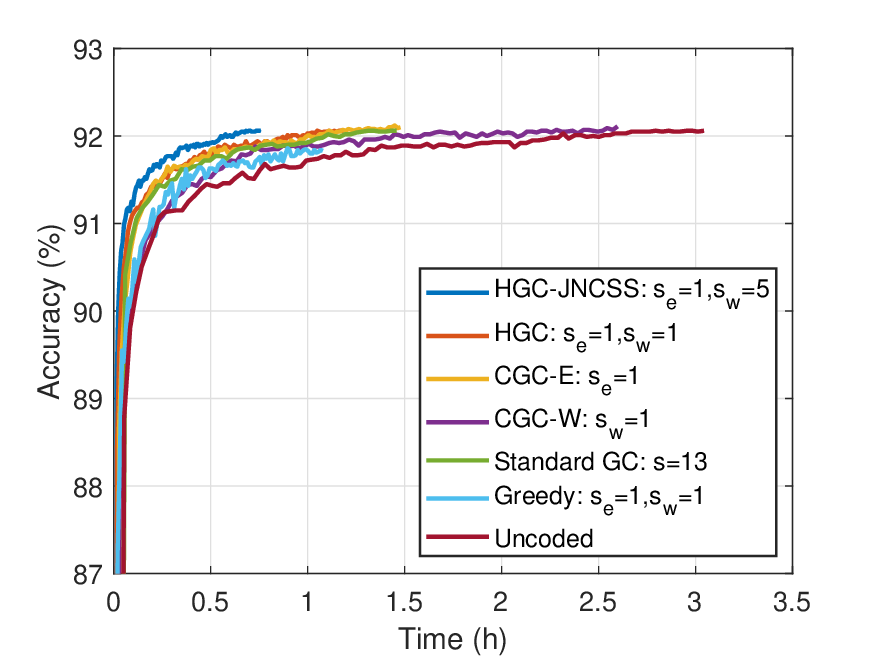}
    	}
        \subfloat[MNIST and non-IID Level \uppercase\expandafter{\romannumeral3}]{
            \label{fig4c}
	      \includegraphics[width=0.325\textwidth]{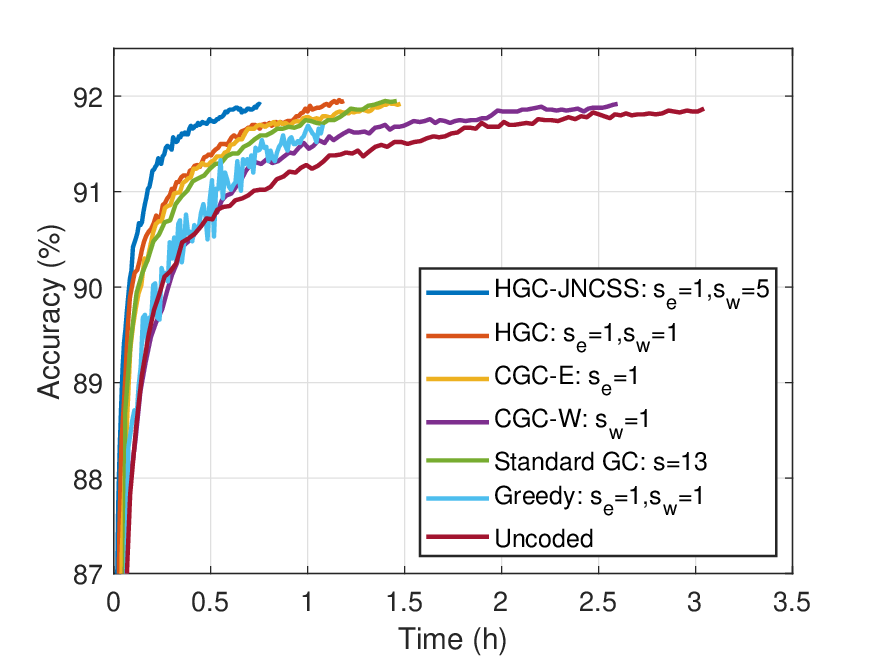}
	}
	\\
	\subfloat[CIFAR-10 and non-IID Level \uppercase\expandafter{\romannumeral1}]{
            \label{fig4d}
		\includegraphics[width=0.325\textwidth]{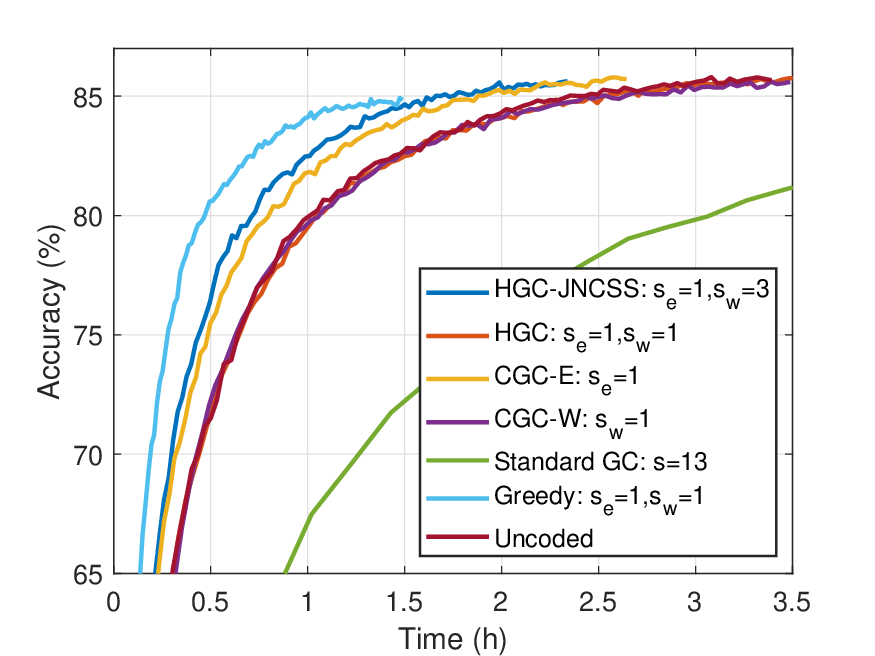}
	}
        \subfloat[CIFAR-10 and non-IID Level \uppercase\expandafter{\romannumeral2}]{
            \label{fig4e}
		\includegraphics[width=0.325\textwidth]{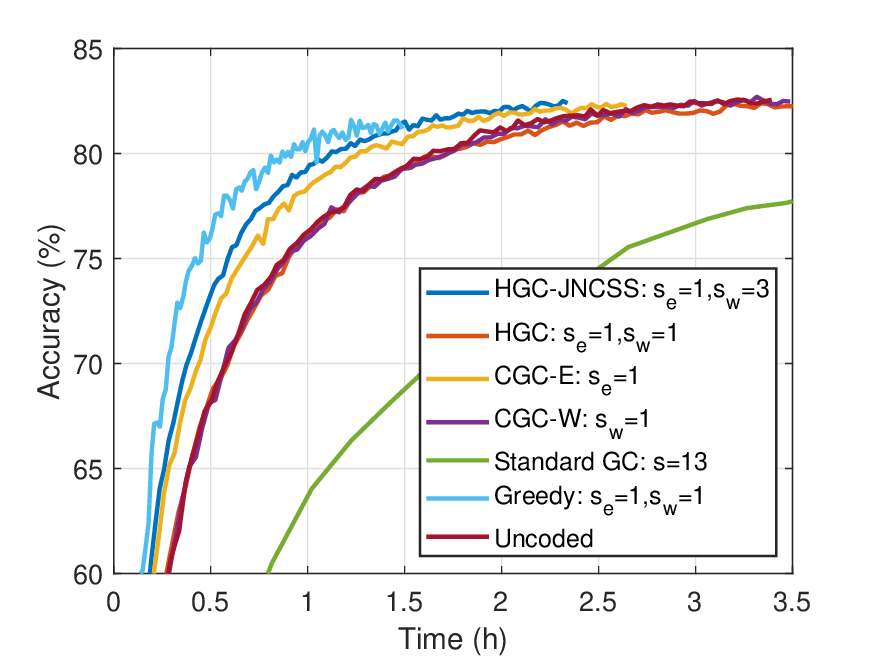}
    	}
        \subfloat[CIFAR-10 and non-IID Level \uppercase\expandafter{\romannumeral3}]{
            \label{fig4f}
		\includegraphics[width=0.325\textwidth]{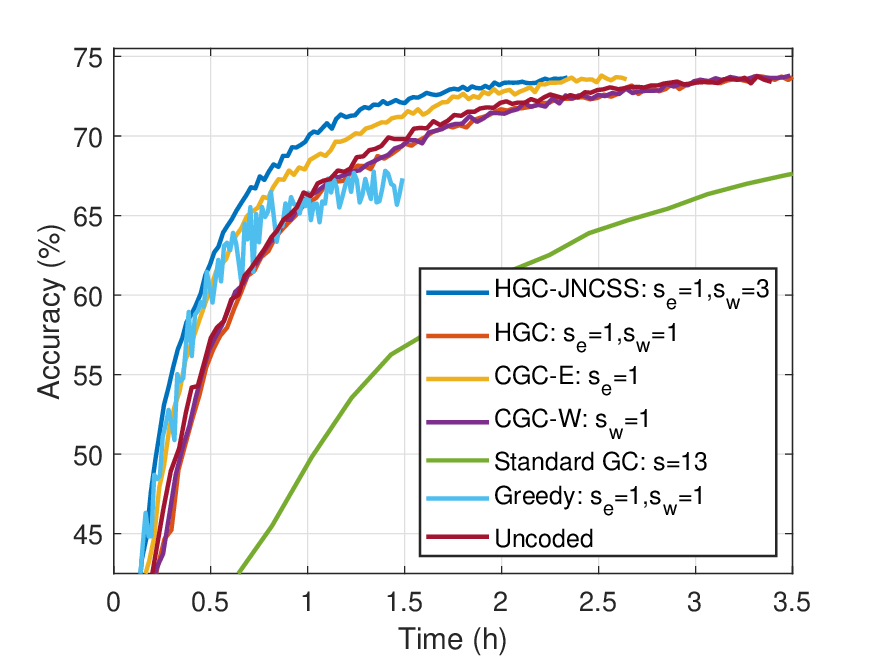}
    	}
	\caption{Test accuracy curves with respect to training time for different datasets and data non-IID levels.}
	\label{fig4}
\end{figure*}

In this section, we carry out simulations to demonstrate the performance of our proposed hierarchical gradient coding scheme and the JNCSS optimization approach. First, we describe our simulation setting, and then we present the simulation results.

\subsection{Simulation Setting}
We consider a hierarchical distributed learning system with 1 master and $n=4$ edge nodes which connect to $m=10$ workers separately. And the runtime model described in Section \ref{set51} is used for the system with the  parameters as followings. To model heterogeneity across distinct edge nodes and workers, we first introduce 3 types of edge nodes: (1) Type \uppercase\expandafter{\romannumeral1}: 1 node with strong communication capability that $p_{i}=0.1$ and $\tau_{i}=50$ ms, (2) Type \uppercase\expandafter{\romannumeral2}: 2 nodes with normal communication capability that $p_{i}=0.1$ and $\tau_{i}=100$ ms, (3) Type \uppercase\expandafter{\romannumeral3}: 1 node with weak communication capability that $p_{i}=0.2$ and $\tau_{i}=500$ ms. And the workers of each edge node are divided into 4 type: (1) Type \uppercase\expandafter{\romannumeral1}: 5 workers with strong computation and strong communication capabilities that $p_{(i,j)}=0.1$, $\tau_{(i,j)}=50$ ms and $\gamma_{(i,j)}=0.1$ $\mathrm{ms}^{-1}$, (2) Type \uppercase\expandafter{\romannumeral2}: 2 workers with strong computation but weak communication capabilities that $p_{(i,j)}=0.5$, $\tau_{(i,j)}=100$ ms and $\gamma_{(i,j)}=0.1$ $\mathrm{ms}^{-1}$, (3) Type \uppercase\expandafter{\romannumeral3}: 2 workers with weak computation but strong communication capabilities that $p_{(i,j)}=0.1$, $\tau_{(i,j)}=50$ ms and $\gamma_{(i,j)}=0.01$ $\mathrm{ms}^{-1}$, (4) Type \uppercase\expandafter{\romannumeral4}: 1 worker with weak computation and weak communication capabilities that $p_{(i,j)}=0.5$, $\tau_{(i,j)}=100$ ms and $\gamma_{(i,j)}=0.01$ $\mathrm{ms}^{-1}$.

We adopt 2 popular image classification datasets MNIST\cite{lecun1998gradient} and CIFAR-10\cite{krizhevsky2009learning} on which we train a logistic regression model and a CNN including 6 convolution layers and 3 fully connected layers, respectively. All the training samples are divided into $K=40$ disjoint sub-datasets. Moreover, we consider 3 levels of non-IID of local sub-datasets: (1) Level \uppercase\expandafter{\romannumeral1}: Each sub-dataset is allocated data samples randomly from all the sample types, (2) Level \uppercase\expandafter{\romannumeral2}: Each sub-dataset is allocated data samples randomly from 5 sample types at most, (3) Level \uppercase\expandafter{\romannumeral3}: Each sub-dataset is allocated data samples randomly from 2 sample types at most. For MNIST, workers with strong computation capability have $c_{(i,j)}=10$ ms, and other workers with weak computation capability have $c_{(i,j)}=50$ ms. For CIFAR-10, workers with strong computation capability have $c_{(i,j)}=100$ ms, and other workers with weak computation capability have $c_{(i,j)}=500$ ms.

In the simulations, conventional gradient coding schemes designed between workers and the master is not included since it is unable to decode and pre-aggregate the partial gradient results in edge nodes, resulting in higher communication cost, as described in Section \ref{set1}. Hence, we compare the following schemes, all of which can perform decoding or pre-aggregation in edge nodes.
\begin{itemize}
\item Uncoded: Each worker computes the gradient results over just one local sub-dataset. And each edge node waits to aggregate the gradient results from all the workers while the master also waits to aggregate from all the edge nodes.

\item Greedy: Each worker computes the gradient results over just one local sub-dataset, and each edge node aggregates the gradient results from the fastest $m-s_{w}$ workers while the master aggregates from the fastest $n-s_{e}$ edge nodes.

\item CGC-W: Conventional single-layer gradient coding scheme designed between workers and edge nodes to tolerate stragglers of workers.

\item CGC-E: Conventional single-layer gradient coding scheme designed between edge nodes and the master to tolerate edge stragglers.

\item Standard GC: Conventional single-layer gradient coding scheme designed in a standard worker-master architecture without edge nodes. To provide equal level of stragglers tolerance as $s_{e}$, $s_{w}$ in hierarchical architecture, Standard GC should tolerate $s=s_{e}m+s_{w}(n-s_{e})$ stragglers in worker-master architecture.

\item HGC: The proposed gradient coding scheme as described in Section \ref{set3}.

\item HGC-JNCSS: The proposed gradient coding scheme whose stragglers tolerance is optimized by Algorithm \ref{alg2}.
\end{itemize}

\subsection{Simulation Results}
The simulation results are presented in Fig. \ref{fig3} and Fig. \ref{fig4}, illustrating the performance of the above schemes for 500 training iterations. In Fig. \ref{fig3a}, Fig. \ref{fig3b}, and Fig. \ref{fig3c}, we present the accuracy curves for MNIST and different non-IID levels, with respect to training iterations. Obviously, all the coded schemes achieve comparable accuracy to the Uncoded scheme using the same number of training iterations, due to the ability to recover full gradients without stragglers. In contrast, the Greedy scheme, which ignores those stragglers, consistently performs worse than these schemes. As for the accuracy curves in terms of training time presented in Fig. \ref{fig4a}, Fig. \ref{fig4b}, and Fig. \ref{fig4c}, we can observe that our hierarchical gradient coding (HGC) scheme completes the training with significantly less time than the conventional coded and Uncoded schemes. In particular, the HGC-JNCSS scheme stands out by achieving the shortest training time among all schemes, outpacing even the Greedy scheme.

And the accuracy curves for CIFAR-10, presented in Fig. \ref{fig3d}, Fig. \ref{fig3e}, and Fig. \ref{fig3f} respectively, illustrate the performance of different schemes over the course of training iterations. Similarly, it can be observed that all the coded schemes achieve comparable accuracy with the Uncoded scheme during training, except for the Greedy scheme. However, as presented in Fig. \ref{fig4d}, Fig. \ref{fig4e}, and Fig. \ref{fig4f}, the HGC scheme is unable to complete the training faster than the conventional coded and Uncoded schemes, because of the inappropriate straggler tolerance selection, while our HGC-JNCSS scheme always provides a much better speed up. In addition, despite completing training earlier, the Greedy scheme yields lower accuracy compared to other schemes.

\begin{figure}[t]
    \centering
    \includegraphics[width=0.5\textwidth]{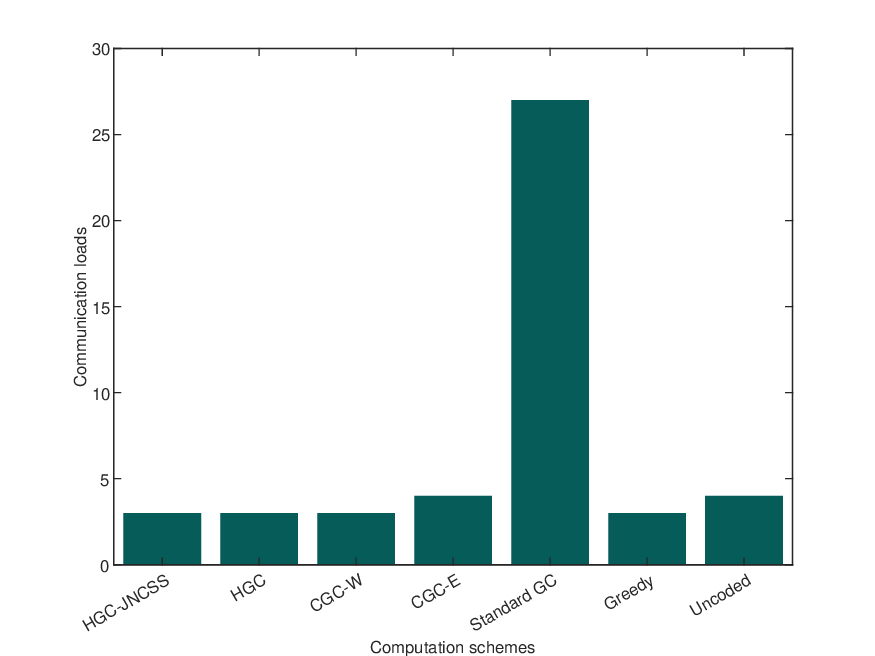}
    \caption{Communication loads of the master with different schemes.}
\label{comloads}
\end{figure}
\begin{figure}[t]
	\centering
	\includegraphics[width=0.5\textwidth]{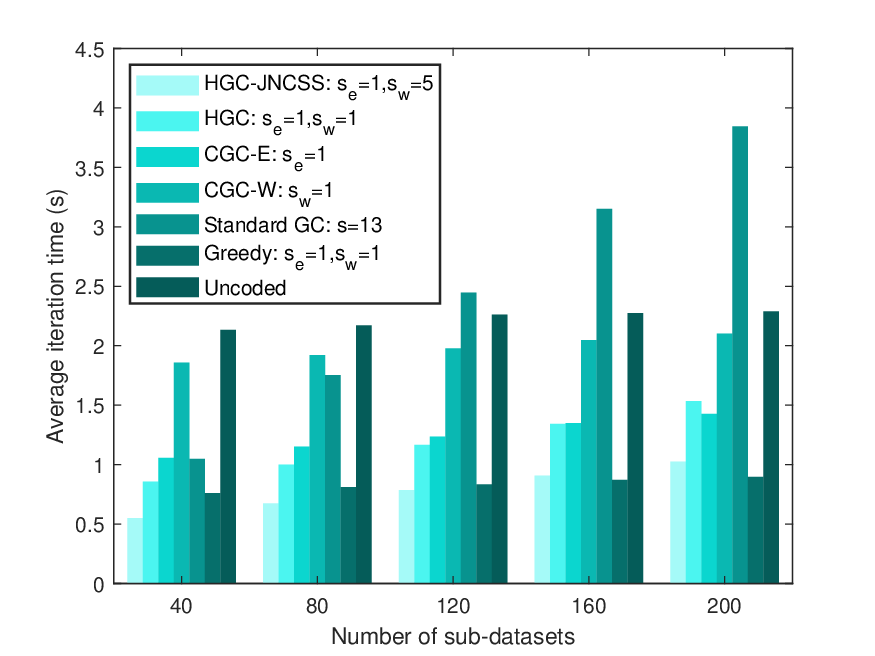}
	\caption{Average iteration times of different schemes with different numbers of sub-datasets.}
        \label{fig5}
\end{figure}
We also compare the communication loads of the master with different schemes, which is proportional to the number of computation results it receives. And the simulation results are presented in Fig. \ref{comloads}. We can observe that with the introduction of hierarchical architecture, the communication loads of the Uncoded scheme and CGC-E scheme are significantly reduced compared to the Standard GC scheme which only considers worker-master architecture. Moreover, the other schemes further reduce the communication loads due to their abilities to tolerate straggling edge nodes.

\begin{table*}[t]\label{tab1}
\caption{Training time to reach target accuracy for different datasets and data non-IID levels.}
\begin{center}
\resizebox{\textwidth}{!}{
\begin{tabular}{c|c|c c c c c c c}
\hline
Dataset&Accuracy (\%)&HGC-JNCSS (h)&HGC (h)&CGC-E (h)&CGC-W (h)&Standard GC (h)&Greedy (h)&Uncoded (h)\\
\hline
MNIST-\uppercase\expandafter{\romannumeral1} & 92.0 & \textbf{0.16} & 0.18 & 0.32 & 0.51 & 0.23 & 0.38 & 0.86\\
MNIST-\uppercase\expandafter{\romannumeral2} & 91.8 & \textbf{0.18} & 0.28 & 0.27 & 0.50 & 0.57 & 0.37 & 0.61\\
MNIST-\uppercase\expandafter{\romannumeral3} & 91.0 & \textbf{0.17} & 0.28 & 0.33 & 0.59 & 0.35 & 0.44 & 0.71\\
\hline
CIFAR-10-\uppercase\expandafter{\romannumeral1} & 84.5 & \textbf{1.43} & 2.12 & 1.60 & 2.14 & 6.12 & 1.09 & 2.04\\
CIFAR-10-\uppercase\expandafter{\romannumeral2} & 82.5 & \textbf{2.16} & 3.27 & 2.46 & 2.99 & 9.44 & — & 2.89\\
CIFAR-10-\uppercase\expandafter{\romannumeral3} & 72.5 & \textbf{1.56} & 2.34 & 1.75 & 2.32 & 6.73 & — & 2.25\\
\hline
\end{tabular}}
\label{tab1}
\end{center}
\end{table*}
To highlight the performance of our proposed schemes, especially the HGC-JNCSS scheme, we conduct a comparison of different schemes' training time required to achieve target accuracy, as presented in Table \ref{tab1}. For MNIST, both HGC and HGC-JNCSS schemes exhibit considerable acceleration compared to the other schemes. In particular, the HGC scheme shows speed-ups of up to 2.83$\times$ and 4.78$\times$ over the conventional coded and Uncoded schemes, while the HGC-JNCSS scheme demonstrates a 1.64$\times$ speed-up compared to the HGC scheme. As for CIFAR-10, the HGC-JNCSS scheme achieves speed-ups of up to 4.37$\times$ and 1.44$\times$ over the conventional coded and Uncoded schemes. Additionally, with the increasing of data non-IID level, the Greedy scheme deteriorates and even fails to achieve target accuracy.

To further demonstrate the superiority of our schemes, in Fig. \ref{fig5}, we present the average iteration times of different schemes, with the same parameters setting for MNIST and different sub-datasets numbers. The average iteration times of all the schemes increase with the number of sub-datasets raising from 40 to 200, and our proposed HGC and HGC-JNCSS schemes outperform the other coded schemes and Uncoded scheme most of the time. Wherein, the HGC scheme achieves up to $60.1\%$ and $59.8\%$ performance gain over the conventional coded and Uncoded schemes. Moreover, the HGC-JNCSS scheme achieves up to $33.7\%$ performance gain over the HGC scheme. Besides, although the Greedy scheme achieves faster average iteration time when number of sub-datasets increases, it comes with the cost of its worst model performance as shown in our other simulation results.

\section{Related Works}\label{set7}
To mitigate the stragglers effect in distributed computing systems, a series of studies introduced coding theory-based techniques. By adding redundant computing tasks to the workers, these techniques enable the central master to recover the final computing results with the outputs from non-stragglers. Lee et al.\cite{lee2017speeding} utilized MDS codes for stragglers mitigation in distributed matrix multiplication. For high-dimension matrix multiplication, Lee et al.\cite{lee2017high} constructed polynomial codes to improve the performance. Dutta et al.\cite{dutta2017coded} proposed a coded convolution scheme. And Yu et al.\cite{yu2017coded} also utilized MDS codes to speed up distributed Fourier transform.

\begin{table*}[t]\label{tab2}
\caption{Related Works on Gradient Coding.}
\begin{center}
\begin{tabular}{c c c c}
\hline
Ref. & Hierarchical Architecture & Worker Stragglers Tolerance & Edge Stragglers Tolerance\\
\hline
\cite{tandon2017gradient,raviv2018gradient,halbawi2018improving,wang2019heterogeneity,ye2018communication,cao2021adaptive} & \XSolidBrush & \Checkmark & \XSolidBrush \\
\cite{prakash2020hierarchical,sasidharan2022coded} & \Checkmark & \Checkmark & \XSolidBrush \\
This Work & \Checkmark & \Checkmark &\Checkmark \\
\hline
\end{tabular}
\label{tab2}
\end{center}
\end{table*}
Especially, Tandon et al.\cite{tandon2017gradient} first proposed gradient coding to handle the stragglers effect in distributed gradient descent. After that, Raviv et al.\cite{raviv2018gradient} and Halbawi et al.\cite{halbawi2018improving} utilized cyclic MDS codes and Reed-Solomon codes, respectively, to achieve the same recovery threshold as in\cite{tandon2017gradient}. And Wang et al.\cite{wang2019heterogeneity} proposed a heterogeneity-aware gradient coding scheme while Ye et al.\cite{ye2018communication} and Cao et al.\cite{cao2021adaptive} considered the trade-off between computational loads, stragglers tolerance and communication cost in gradient coding, to further reduce the runtime. In addition, Ozfatur et al.\cite{DSW8755563} and Buyukates et al.\cite{TCOM9755943} utilized partial computing results and clustering to improve the performance of gradient coding. Above studies on gradient coding focused on the workers-master topology design. Because of the straggling communication links, Prakash et al.\cite{prakash2020hierarchical} and Sasidharan et al.\cite{sasidharan2022coded} introduced a hierarchical architecture in gradient coding, where reliable helper nodes are located close to the workers. However, such a hierarchical design does not consider the edge stragglers, which may lead to more severe stragglers effect. Related works on gradient coding are summarized in Table \ref{tab2}.

As one of the important frameworks to distributedly train machine learning model, federated learning has also been combined with coding techniques in recent studies, such as CodedFedL\cite{prakash2020coded}, CodedPaddedFL\cite{schlegel2021codedpaddedfl} and DRes-FL\cite{shao2022dres}.

There are also some studies considering the runtime modeling and optimization of coded distributed computing systems. Van Huynh et al.\cite{van2021joint} and Nguyen et al.\cite{nguyen2021jointly} formulated coding scheme optimization problems, then solve with DRL and MINLP solvers, respectively. However, most of these studies focus on the modeling and optimization of conventional workers-master topology designed systems, but not hierarchical architecture.

It is also noting that joint communication and computation design for distributed edge learning has been extensively investigated in the literature, e.g., \cite{jpba9194337,efrra9145118,effel9475121,jtcro9682096,hcra9745059}. In particular, many works focused on the context of federated edge learning. For example, Huang et al.\cite{efrra9145118} optimized ratio resource allocation while Mo et al.\cite{effel9475121} proposed a joint communication and computation design to achieve energy-efficiency. Besides, Huang et al.\cite{jtcro9682096} jointly optimized communication topology and computation resource in federated edge learning.

\section{Conclusion}\label{set8}
In this work, we first derive the fundamental computational trade-off between the computational loads of workers and the stragglers tolerance. We then propose a hierarchical gradient coding framework to provide better stragglers mitigation for more severe stragglers effect in hierarchical architecture, which achieves the optimal trade-off. Moreover, to further improve the performance of our proposed framework in heterogeneous scenarios, we formulate a jointly node and coding scheme selection problem with the objective of minimizing the expected execution time for each iteration in hierarchical coded distributed learning systems, and develop an optimization algorithm to mathematically solve the problem. Simulation results demonstrate the superiority of our schemes compared with the other conventional schemes. Interesting directions for future work include generalization of our proposed scheme to the unbalanced training workload allocation and exploring the trade-off between computational loads, stragglers tolerance, and communication cost in hierarchical distributed learning systems.

\begin{appendices}
\section{Proof of Theorem \ref{theo2}}\label{apc}
We begin with considering a sub-problem of the JNCSS problem $\mathcal{P}_{1}$, as following:
\begin{equation}
    \begin{aligned}
        \mathcal{P}_{2}:&\min_{\boldsymbol{e},\boldsymbol{w}}\ \ T_{tol},\\
        &\ \ \text{s.t.}\ \ (\ref{opt1})-(\ref{opt6}).
    \end{aligned}
\end{equation}
For any fixed stragglers tolerance $(s_{e},s_{w})$, we denote the optimal value of $\mathcal{P}_{2}$ as $T_{tol}(s_{e},s_{w})$, and the optimal value of $\mathcal{P}_{1}$ can be expressed as $\min\limits_{s_{e},s_{w}}T_{tol}(s_{e},s_{w})$. According to Algorithm \ref{alg2} in Section \ref{set53}, we can determine that $\hat{T}_{tol}(s_{e},s_{w})=\min\limits_{(n-s_{e})-\mathrm{th}}(A_{i}+\min\limits_{(m_{i}-s_{w})-\mathrm{th}}B_{(i,j)})$. Then, we try to prove that $\hat{T}_{tol}(s_{e},s_{w})$ equals to the optimal $T_{tol}(s_{e},s_{w})$ in $\mathcal{P}_{2}$, for any fixed stragglers tolerance $(s_{e},s_{w})$. Without loss of generality, we suppose that $B_{(i,j_{1})}\leq B_{(i,j_{2})}$ if $j_{1}\leq j_{2}$, and $A_{i_{1}}+\min\limits_{(m_{i_{1}}-s_{w})-\mathrm{th}}B_{(i_{1},j)}\leq A_{i_{2}}+\min\limits_{(m_{i_{2}}-s_{w})-\mathrm{th}}B_{(i_{2},j)}$ if $i_{1}\leq i_{2}$. With these assumptions, we can rewrite $\hat{T}_{tol}(s_{e},s_{w})$ as
\begin{equation}
    \hat{T}_{tol}(s_{e},s_{w})=A_{n-s_{e}}+B_{(n-s_{e},m_{n-s_{e}}-s_{w})}.
\end{equation}
And one of the corresponding nodes selection schemes, which achieve $\hat{T}_{tol}(s_{e},s_{w})$, is as follows: $e_{i}=1$ if $i\textless n-s_{e}$, and $w_{(i,j)}=1$ if $e_{i}=1$, $j\textless m_{j}-s_{w}$.

If there exists a more efficient edge nodes selection scheme resulting in a lower value of $T_{tol}$ than $\hat{T}_{tol}(s_{e},s_{w})$, it implies that at least one edge node satisfies
\begin{equation}
A_{i}+B_{(i,m_{i}-s_{w})}\textless A_{n-s_{e}}+B_{(n-s_{e},m_{n-s_{e}}-s_{w})},\ \ i\textgreater n-s_{e}.
\end{equation}
Obviously, there is a contradiction.

Similarly, if there exists a more efficient workers selection scheme while edge nodes selection scheme is the same, it implies that at least one worker satisfies
\begin{equation}
B_{(n-s_{e},j)}\textless B_{(n-s_{e},m_{n-s_{e}}-s_{w})},\ \ j\textgreater m_{n-s_{e}}-s_{w},
\end{equation}
which is also a contradiction. By the proof above, we have proved that for any fixed $(s_{e},s_{w})$, $\hat{T}_{tol}(s_{e},s_{w})=\min\limits_{(n-s_{e})-\mathrm{th}}(A_{i}+\min\limits_{(m_{i}-s_{w})-\mathrm{th}}B_{(i,j)})$ equals to the optimal $T_{tol}(s_{e},s_{w})$ of $\mathcal{P}_{2}$. And according to Line 11 of Algorithm \ref{alg2}, $\hat{T}_{tol}=\min\limits_{s_{e},s_{w}}\hat{T}_{tol}(s_{e},s_{w})$ is the optimal value of the JNCSS problem $\mathcal{P}_{1}$. Consequentially, the corresponding nodes selection and coding scheme return of Algorithm \ref{alg2} is also one of the optimal schemes.

\section{Proof of Theorem \ref{t2}}\label{apa}
Before proving Theorem \ref{t2}, we first introduce the following lemma, which is proved in Appendix \ref{apb}.
\begin{lemma}\label{lem1}
If $n$ random variables $X_{i}$, $i\in[n]$, with expectations $u_{i}$ and variances $\sigma_{i}^{2}$, are arranged in order of magnitude as $X_{(1)}\leq X_{(2)}\leq ...\leq X_{(n)}$, we call $X_{(i)}$ the $i$-th order statistic\cite{david2004order} with expectations $u_{(i)}$. And We have
\begin{equation}\label{lem}
    |u_{(r)}-u_{r}|\leq f(n,r)\sqrt{\sum_{i}\left[\sigma_{i}^{2}+(u_{i}-\overline{u})^{2}\right]-n\overline{\sigma}^{2}},
\end{equation}
where $f(n,r)=\sqrt{\frac{r-1}{n(n-r+1)}}+\sqrt{\frac{n-r}{nr}}$, $r\in[n]$, $\overline{u}$ and $\overline{\sigma}^{2}$ are the expectation and variance of $\overline{X}=\frac{\sum_{i}X_{i}}{n}$, respectively.
\end{lemma}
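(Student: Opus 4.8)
The plan is to reduce Lemma~\ref{lem1} to a deterministic (sample-wise) Samuelson--Scott-type inequality for order statistics, then lift it to expectations with Jensen's inequality, taking care to combine the two one-sided estimates so that no spurious factor of $2$ appears. Throughout, $u_{(r)}=\mathbb{E}[X_{(r)}]$ and $u_r$ is read as the $r$-th order statistic of the constants $u_1,\dots,u_n$, and I write $\Delta$ for the quantity under the root in \eqref{lem}.

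\textbf{Step 1 (deterministic bound).} I first show that for any reals $x_1,\dots,x_n$ with mean $\overline x$ and $s^2:=\sum_i(x_i-\overline x)^2$, the $r$-th order statistic satisfies
\[
-\sqrt{\tfrac{n-r}{nr}}\,s \;\le\; x_{(r)}-\overline x \;\le\; \sqrt{\tfrac{r-1}{n(n-r+1)}}\,s .
\]
For the upper bound I may assume $\overline x=0$ and $a:=x_{(r)}>0$ (otherwise it is trivial). Split the sorted values into the $r-1$ values that are $\le a$ and the $n-r+1$ values that are $\ge a$. Since all entries sum to $0$, the small group sums to $-\sum(\text{large group})\le -(n-r+1)a$, so Cauchy--Schwarz on the $r-1$ terms of the small group gives $\sum(\text{small})^2\ge (n-r+1)^2a^2/(r-1)$, while each term of the large group has square $\ge a^2$. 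Adding, $s^2\ge \tfrac{n(n-r+1)}{r-1}\,a^2$, which is the claim; the lower bound follows by applying the same argument to $-x_i$, whose $(n-r+1)$-th order statistic is $-x_{(r)}$.

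\textbf{Step 2 (lifting to expectations and dominating the spread of the means).} Applying Step~1 pointwise to $X_1,\dots,X_n$ and taking expectations, concavity of $\sqrt{\cdot}$ yields
\[
\overline u-\sqrt{\tfrac{n-r}{nr}}\,\Delta \;\le\; u_{(r)} \;\le\; \overline u+\sqrt{\tfrac{r-1}{n(n-r+1)}}\,\Delta ,
\]
where I use $\mathbb{E}\sum_i(X_i-\overline X)^2=\sum_i\sigma_i^2+\sum_i(u_i-\overline u)^2-n\overline\sigma^2=\Delta^2$. Applying Step~1 instead to the \emph{constants} $u_1,\dots,u_n$ gives the same pair of inequalities for $u_r$, with $\sqrt{\sum_i(u_i-\overline u)^2}$ in place of $\Delta$. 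Because $n\overline\sigma^2=n\,\mathbb{V}^2[\overline X]=n^{-1}\sum_{i,j}\mathrm{Cov}(X_i,X_j)\le n^{-1}(\sum_i\sigma_i)^2\le\sum_i\sigma_i^2$, we get $\sum_i(u_i-\overline u)^2\le\Delta^2$ (this also shows $\Delta^2\ge0$), so both one-sided bounds for $u_r$ hold with $\Delta$ as well.

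\textbf{Step 3 (conclusion).} Subtracting the upper bound for $u_r$ from the lower bound for $u_{(r)}$ cancels $\overline u$ and leaves $u_{(r)}-u_r\ge -(\sqrt{\tfrac{r-1}{n(n-r+1)}}+\sqrt{\tfrac{n-r}{nr}})\Delta=-f(n,r)\Delta$; pairing the other two one-sided bounds gives $u_{(r)}-u_r\le f(n,r)\Delta$. Hence $|u_{(r)}-u_r|\le f(n,r)\Delta$, which is \eqref{lem}. I expect the only real subtlety to be in Steps~2--3: a naive triangle inequality through $\overline u$ would double the constant, so one must pair the upper bound of $u_{(r)}$ with the lower bound of $u_r$ (and vice versa), and one must verify $n\overline\sigma^2\le\sum_i\sigma_i^2$ so that the dispersion of the means is itself controlled by $\Delta$.
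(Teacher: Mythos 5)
Your proof is correct, and it reaches the paper's bound by a slightly different route. Both arguments ultimately rest on the same two one-sided estimates, $u_{(r)}\le \overline{u}+\sqrt{\tfrac{r-1}{n(n-r+1)}}\,\Delta$ and $u_{r}\ge \overline{u}-\sqrt{\tfrac{n-r}{nr}}\,\Delta$ (plus their mirror images), paired so that $\overline{u}$ cancels and the constants add to $f(n,r)$ rather than doubling; that pairing step is identical to the paper's. Where you differ is in how the one-sided estimates are obtained: the paper works in expectation throughout, bounding $\sum_i v_i(u_{(i)}-\overline{u})=\mathbb{E}\sum_i(v_i-\overline{v})(X_{(i)}-\overline{X})$ by Cauchy--Schwarz with indicator-type weights $v_i$ and invoking the tail-averaging inequalities $u_{(r)}\le \frac{1}{n-r+1}\sum_{i\ge r}u_{(i)}$ and $u_r\ge\frac{1}{r}\sum_{i\le r}u_i$, whereas you first prove a samplewise Samuelson--Scott inequality for deterministic order statistics and then lift it with Jensen. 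The two derivations yield exactly the same constants, so neither buys a sharper bound; your version has the merit of isolating the combinatorial content in a purely deterministic lemma, and of explicitly verifying $n\overline{\sigma}^{2}\le\sum_i\sigma_i^{2}$ (via $\mathrm{Cov}(X_i,X_j)\le\sigma_i\sigma_j$), which is needed so that the spread $\sum_i(u_i-\overline{u})^2$ of the means is itself dominated by $\Delta^2$ -- a point the paper passes over with a ``similarly.'' Your reading of $u_r$ as the $r$-th order statistic of the means is also the right one: it is what the paper's inequality $u_r\ge\frac{1}{r}\sum_{i\le r}u_i$ tacitly assumes and what the application in Theorem 3 requires.
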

According to Algorithm \ref{alg2} and our formulation in Section \ref{set51}, we have
\begin{equation}\label{t21}
\begin{aligned}
    \mathbb{E}&\left[T_{tol}-\hat{T}_{tol}\right]=\mathbb{E}\bigg[\min\limits_{(n-\hat{s}_{e})-\mathrm{th}}(T^{i}_{com}+\min\limits_{(m_{i}-\hat{s}_{w})-\mathrm{th}}T^{(i,j)}_{tol})\\
    &-\min\limits_{(n-\hat{s}_{e})-\mathrm{th}}\left(\mathbb{E}\left[T^{i}_{com}\right]+\min\limits_{(m_{i}-\hat{s}_{w})-\mathrm{th}}\mathbb{E}\left[T^{(i,j)}_{tol}\right]\right)\bigg].
\end{aligned}
\end{equation}
Using Lemma \ref{lem1}, we have
\begin{equation}
\begin{aligned}
    \mathbb{E}\left[\min\limits_{(m_{i}-\hat{s}_{w})-\mathrm{th}}T^{(i,j)}_{tol}\right]&-f(m_{i},m_{i}-\hat{s}_{w})\Delta_{w}^{i}
    \\ & \leq\min\limits_{(m_{i}-\hat{s}_{w})-\mathrm{th}}\mathbb{E}\left[T^{(i,j)}_{tol}\right].
\end{aligned}
\end{equation}
And recall that $T^{i}_{tol}=T^{i}_{com}+\min\limits_{(m_{i}-\hat{s}_{w})-\mathrm{th}}T^{(i,j)}_{tol}$, we can upper bound (\ref{t21}) as
\begin{equation}
\begin{aligned}
    \mathbb{E}\left[T_{tol}-\hat{T}_{tol}\right]\leq\mathbb{E}&\bigg[\min\limits_{(n-\hat{s}_{e})-\mathrm{th}}T^{i}_{tol}-\min\limits_{(n-\hat{s}_{e})-\mathrm{th}}\mathbb{E}\left[T^{i}_{tol}\right]
    \\& +\max\limits_{i}f(m_{i},m_{i}-\hat{s}_{w})\Delta_{w}^{i}\bigg].
\end{aligned}
\end{equation}
Using Lemma \ref{lem1} again, we can upper bound the inequality above as
\begin{equation}\label{t22}
    \mathbb{E}\left[T_{tol}-\hat{T}_{tol}\right]\leq f(n,n-\hat{s}_{e})\Delta_{e}+\max\limits_{i}f(m_{i},m_{i}-\hat{s}_{w})\Delta_{w}^{i}.
\end{equation}

For the same reason, we can also prove that $\mathbb{E}\left[\hat{T}_{tol}-T_{tol}\right]\leq f(n,n-\hat{s}_{e})\Delta_{e}+\max\limits_{i}f(m_{i},m_{i}-\hat{s}_{w})\Delta_{w}^{i}$. With the inequalities above, Theorem \ref{t2} is proved.

\section{Proof of Lemma \ref{lem1}}\label{apb}
For any $n$ constants $v_{1},v_{2},...,v_{n}$, by Cauchy's inequality, we have
\begin{equation}\label{le1}
\begin{aligned}
    \left|\sum_{i}{v_{i}(u_{(i)}-\overline{u})}\right|&=\left|\mathbb{E}\sum_{i}{(v_{i}-\overline{v})(X_{(i)}-\overline{X})}\right|\\
    &\leq\left[\sum_{i}{(v_{i}-\overline{v})^{2}}\right]^{\frac{1}{2}}\left[\mathbb{E}\sum_{i}{(X_{(i)}-\overline{X})^{2}}\right]^{\frac{1}{2}}\\
    &=\left[\sum_{i}{(v_{i}-\overline{v})^{2}}\right]^{\frac{1}{2}}\left[\mathbb{E}\sum_{i}{(X_{i}-\overline{X})^{2}}\right]^{\frac{1}{2}},
\end{aligned}
\end{equation}
where $\overline{v}=\frac{\sum_{i}{v_{i}}}{n}$. For the right hand side of (\ref{le1}), we have
\begin{equation}
    \begin{aligned}
        \sum_{i}{(X_{i}-\overline{X})^{2}}=& \sum_{i}{(X_{i}-u_{i})^{2}}+\sum_{i}(u_{i}-\overline{u})^{2}
        \\& +2\sum_{i}{(X_{i}-u_{i})(u_{i}-\overline{u})}-n(\overline{X}-\overline{u})^{2}.
    \end{aligned}
\end{equation}
Applying equation above to the right-hand side of (\ref{le1}), it becomes
\begin{equation}\label{le2}
\begin{aligned}
    \left|\sum_{i}{v_{i}(u_{(i)}-\overline{u})}\right|\leq&\left[\sum_{i}{(v_{i}-\overline{v})^{2}}\right]^{\frac{1}{2}}
    \\ &\times\left[\sum_{i}\left[\sigma_{i}^{2}+(u_{i}-\overline{u})^{2}\right]-n\overline{\sigma}^{2}\right]^{\frac{1}{2}}.
\end{aligned}
\end{equation}
Similarly, we can also prove that
\begin{equation}\label{le3}
\begin{aligned}
   \left|\sum_{i}{v_{i}(u_{i}-\overline{u})}\right|\leq&\left[\sum_{i}{(v_{i}-\overline{v})^{2}}\right]^{\frac{1}{2}}
   \\ &\times\left[\sum_{i}\left[\sigma_{i}^{2}+(u_{i}-\overline{u})^{2}\right]-n\overline{\sigma}^{2}\right]^{\frac{1}{2}}.
\end{aligned}
\end{equation}

Moreover, it is easy to get that
\begin{equation}\label{le4}
    u_{(r)}\leq \frac{\sum^{n}_{i=r}u_{(i)}}{n-r+1},
\end{equation}
\begin{equation}\label{le5}
    u_{r}\geq\frac{\sum^{r}_{i=1}u_{i}}{r}.
\end{equation}
Then, applying (\ref{le2}) to the right hand side of inequality (\ref{le4}) above by setting $v_{i}=\frac{1}{n-r+1}$ when $i\geq r$ else $v_{i}=0$, we have
\begin{equation}\label{le6}
\begin{aligned}
    u_{(r)}\leq\sqrt{\frac{r-1}{n(n-r+1)}}\left[\sum_{i}\left[\sigma_{i}^{2}+(u_{i}-\overline{u})^{2}\right]-n\overline{\sigma}^{2}\right]^{\frac{1}{2}}+\overline{u}.
\end{aligned}
\end{equation}
By setting $v_{i}=\frac{1}{r}$ when $i\leq r$ else $v_{i}=0$, applying (\ref{le3}) to to the right hand side of inequality (\ref{le5}), we have
\begin{equation}\label{le7}
\begin{aligned}
    u_{r}\geq\overline{u}-\sqrt{\frac{n-r}{nr}}\left[\sum_{i}\left[\sigma_{i}^{2}+(u_{i}-\overline{u})^{2}\right]-n\overline{\sigma}^{2}\right]^{\frac{1}{2}}.
\end{aligned}
\end{equation}
According to inequalities (\ref{le6}) and (\ref{le7}), we can prove that
\begin{equation}
    u_{(r)}-u_{r}\leq f(n,r)\sqrt{\sum_{i}\left[\sigma_{i}^{2}+(u_{i}-\overline{u})^{2}\right]-n\overline{\sigma}^{2}}.
\end{equation}
For the same reason, we can also get $u_{r}-u_{(r)}\leq f(n,r)\sqrt{\sum_{i}\left[\sigma_{i}^{2}+(u_{i}-\overline{u})^{2}\right]-n\overline{\sigma}^{2}}$. Hence, we have proved Lemma \ref{lem1}.
\end{appendices}

\bibliographystyle{IEEEtran}
\bibliography{IEEEabrv,mylib}

\end{document}